\newcommand{\mypara}[1]{\smallskip\noindent\emph{\textbf{#1}.\ }}
\newcounter{slogan}
\Crefname{slogan}{Slogan}{Slogans}
\newcounter{theme}
\newenvironment*{theme}[1][]
{\refstepcounter{theme}
  \begin{center}
    \begin{varwidth}{0.95\textwidth}
      \par
      \itshape
    }
    {
    \end{varwidth}
  \end{center}
}
\Crefname{theme}{Theme}{Themes}
\DeclareMathAlphabet{\mathcal}{OMS}{cmsy}{m}{n}
\newcommand{\tranz}{\mathsf{aut}}
\newcommand{\Asp}{\mathsf{Asp}}
\newcommand{\lang}{\textsc{Facet}}
\newcommand{\str}{\textsc{String}}
\newcommand{\mat}{\textsf{Match}}
\newcommand{\swit}{\textsf{Switch}}
\newcommand{\concat}{\mathsf{Concatenate}}
\newcommand{\constr}{\mathsf{ConstStr}}
\newcommand{\substr}{\mathsf{SubStr}}
\newcommand{\loope}{\mathsf{Loop}}
\newcommand{\cpos}{\mathsf{CPos}}
\newcommand{\poss}{\mathsf{Pos}}
\newcommand{\FO}{\mathsf{FO}}
\newcommand{\FOk}{\FO^{k}}
\newcommand{\Next}{\mathsf{X}}
\newcommand{\Until}{\mathsf{U}}
\newcommand{\Ex}{\mathsf{E}}
\newcommand{\wraparound}{\mathit{wrap}}
\newcommand{\MBoxx}{\ensuremath{\square}}
\newcommand{\MDiamm}{\ensuremath{\lozenge}}
\newcommand{\EG}{\ensuremath{\mathsf{EG}}}
\newcommand{\EU}{\ensuremath{\mathsf{EU}}}
\newcommand{\EX}{\ensuremath{\mathsf{EX}}}
\newcommand{\red}{a}
\newcommand{\blue}{v}
\newcommand{\pos}{\mathsf{pos}}
\newcommand{\subs}{\mathsf{subs}}
\newcommand{\lfp}{\mathsf{lfp}}
\newcommand{\gfp}{\mathsf{gfp}}
\newcommand{\arity}[1]{\mathsf{arity}(#1)}
\theoremstyle{definition}
\newtheorem*{definition}{Definition}
\newtheorem*{problem}{Problem}
\newtheorem*{lemma*}{Lemma}
\newtheorem*{theorem*}{Theorem}
\newtheorem{corollary}{Corollary}
\newtheorem{lemma}{Lemma}
\newtheorem{theorem}{Theorem}
\crefname{section}{\S}{\S}
\Crefname{section}{Section}{Sections}
\crefname{subsection}{Section}{Sections}
\Crefname{subsection}{Section}{Sections}
\crefname{equation}{}{}
\Crefname{equation}{Equation}{Equations}
\Crefname{lemma}{Lemma}{Lemmas}
\Crefname{theorem}{Theorem}{Theorems}
\Crefname{corollary}{Corollary}{Corollaries}
\Crefname{proposition}{Proposition}{Propositions}
\Crefname{corollary}{Corollary}{Corollaries}
\Crefname{algocf}{Algorithm}{Algorithms}
\crefname{enumi}{}{}
\crefname{figure}{Figure}{Figures}
\newcommand{\MSO}{\mathsf{MSO}}
\newcommand{\grammar}{\mathcal{G}}
\newcommand{\Nat}{\mathbb{N}}
\newcommand{\Rat}{\mathbb{Q}}
\newcommand{\Bool}{\mathbb{B}}
\newcommand{\Bb}{\mathcal{B}}
\newcommand{\Oo}{\mathcal{O}}
\newcommand{\Pos}{\mathit{P}}
\newcommand{\Neg}{\mathit{N}}
\newcommand{\var}{Var}
\newcommand{\varit}{\mathit{\var}}
\newcommand{\delvar}{\Delta\mathit{(\var)}}
\newcommand{\flip}{\mathit{flip}}
\newcommand{\grammarsep}{\,\, | \,\,}
\newlength\mylen
\newcommand{\update}[3]{#1[#2\mapsto#3]}
\newcommand{\tru}{\mathsf{True}}
\newcommand{\fals}{\mathsf{False}}
\newcommand{\aut}{\mathcal{A}}
\definecolor{zgray}{RGB}{63,63,63}
\definecolor{zyellow}{RGB}{220,203,155}
\definecolor{zgreen}{RGB}{143,175,159}
\definecolor{zcream}{RGB}{220,220,204}
\definecolor{zpink}{RGB}{220,163,163}
\definecolor{zblue}{RGB}{140,208,211}
\definecolor{assassinblue}{RGB}{110,150,210}
\definecolor{assassinred}{RGB}{210,106,115}
\definecolor{darkablue}{RGB}{30,70,120}
\definecolor{darkzgreen}{RGB}{60,100,80}
\newcommand{\Mm}{\mathcal{M}}
\newcommand{\Pp}{\mathcal{P}}
\newcommand{\Ll}{\mathcal{L}}
\newcommand{\singqt}[1]{\text{`}#1\text{'}}
\newcommand{\dblqt}[1]{\text{``}#1\text{''}}
\newcommand{\Prod}{\ensuremath{::=~}\xspace}
\colorlet{listing-comment}{gray}
\colorlet{operator-symbol}{yellow!45!black}
\lstdefinelanguage{Haskell}{
    language=Caml,
    morekeywords={match, with, case, of, any, all},
    morekeywords=[2]{False, True},
    keywordstyle=[2]\color{darkzgreen},
    morekeywords=[3]{int, bool, list, nat},
    keywordstyle=[3]\color{darkablue},
    literate=%
      {=}{{{\color{operator-symbol}$\mathtt{\coloneqq}$}}}1
      {LAM}{{{$\lambda$}}}1
      {cup}{{{$\cup$}}}1
      {sigma}{{{$\sigma$}}}1
      {conj}{{$\wedge$}}1
      {disj}{{$\vee$}}1
      {neg}{{$\neg$}}1
      {univ}{{$\forall x$}}1
      {exists}{{$\exists x$}}1
      {a_1}{{{$a_1$}}}1
      {w_i}{{{$w_i$}}}1
      {u_i}{{{u$_i$}}}1
      {v_i}{{{v$_i$}}}1
      {a_n}{{{$a_n$}}}1
      {e_1}{{{$e_1$}}}1
      {e_n}{{{$e_n$}}}1
      {e_m}{{{$e_m$}}}1
      {while}{{{while}}}1
      {<}{{{\color{operator-symbol}<}}}1
      {>}{{{\color{operator-symbol}>}}}1
      {:}{{{\color{operator-symbol}:}}}1
      {;}{{{\color{operator-symbol};}}}1
      {|}{{{\color{operator-symbol}|}}}1
      {[}{{{\color{operator-symbol}[}}}1
      {]}{{{\color{operator-symbol}]}}}1
      {\&}{{{\color{operator-symbol}\&}}}1
      {->}{{{\color{operator-symbol}$\rightarrow$}}}1
}
\lstdefinestyle{default}{
    basicstyle=\linespread{1.0}\ttfamily,
    columns=fullflexible,
    commentstyle=\sffamily\color{black!50!white},
    escapechar=\#,
    framexleftmargin=1ex,
    framexrightmargin=1ex,
    keepspaces=true,
    breakindent=0pt,
    keywordstyle=\color{darkablue},
    mathescape,
    showstringspaces=true,
    stepnumber=1,
    xleftmargin=0em,
}
\lstdefinestyle{smallstyle}{
    basicstyle=\footnotesize\ttfamily,
    columns=fullflexible,
    commentstyle=\sffamily\color{black!50!white},
    escapechar=\#,
    framexleftmargin=1ex,
    framexrightmargin=1ex,
    keepspaces=true,
    keywordstyle=\color{darkablue},
    mathescape,
    showstringspaces=true,
    stepnumber=1,
    xleftmargin=2.5em,
}
\newcommand\zlstinline{\let\par\endgraf\lstinline}
\newcommand{\linl}[1]{\lstinline!#1!}
\newcommand{\modal}{\hbox{\linl{Modal}}}
\newcommand{\ctl}{\hbox{\linl{CTL}}}
\newcommand{\fo}{\hbox{\linl{FO}}}
\newcommand{\route}{\mathit{root}}
\newcommand{\reg}{\hbox{\linl{Reg}}}
\newcommand{\no}{\mathit{dual}}
\newcommand{\cfg}{\hbox{\linl{CFG}}}
\newcommand{\Cnt}{\mathsf{Count}}
\newcommand{\ltl}{\hbox{\linl{LTL}}}
\newcommand{\rat}{\hbox{\linl{Rat}}}
\newcommand{\pat}{\mathit{pat}}
\newcommand{\lhs}{\mathit{lhs}}
\newcommand{\rhs}{\mathit{rhs}}
\newcommand{\lab}{\mathit{l}}
\newcommand{\child}{\mathit{c}}
\newcommand{\stay}{\mathit{stay}}
\newcommand{\dir}{\mathit{dir}}
\newcommand{\match}{\mathit{match}}
\newcommand{\clauses}{\mathit{C}}
\newcommand{\dual}{\mathit{dual}}
\newcommand{\subst}{\tau}
\newcommand{\eval}{\mathit{eval}}
\newcommand{\norm}{\mathit{norm}}
\newcommand{\nt}{\mathit{NT}}
\newcommand{\up}{\mathit{up}}
\newcommand{\find}{\mathit{find}}
\newcommand{\reset}{\mathit{reset}}
\newcommand{\topp}{\mathit{top}}
\newcommand{\eend}{\mathit{end}}
\newcommand{\term}{\mathit{term}}
\newcommand{\poly}{\mathit{poly}}
\newcommand{\place}{\mathit{place}}
\newcommand{\iseq}{\mathit{eq}}
\newcommand{\islt}{\mathit{lt}}
\newcommand{\pre}{\mathit{pre}}
\newcommand{\isgeq}{\mathit{geq}}
\newcommand{\isneq}{\mathit{neq}}
\newif\ifappendix\appendixtrue
\newcommand{\app}[2]{\ifappendix#1\else#2\fi}
\newcommand{\betterappref}[1]{\app{\Cref{#1}}{the extended version~\cite{extended-version}}}
\begin{document}

\title[Languages with Decidable Learning: A Meta-theorem]{Languages
  with Decidable Learning: A Meta-theorem}

\titlenote{Version with appendix.}

\author{Paul Krogmeier}
\email{paulmk2@illinois.edu}
\affiliation{ \department{Department of Computer Science}
  \institution{University of Illinois, Urbana-Champaign}
  \country{USA}
}

\author{P. Madhusudan}
\email{madhu@illinois.edu}
\affiliation{ \department{Department of Computer
    Science}
  \institution{University of Illinois,
    Urbana-Champaign}
  \country{USA}
}


\begin{abstract}
We study expression learning problems with syntactic restrictions
and introduce the class of \emph{finite-aspect checkable languages} to
characterize symbolic languages that admit decidable learning. The
semantics of such languages can be defined using a bounded amount of
auxiliary information that is independent of expression size but
depends on a fixed structure over which evaluation occurs. We
introduce a generic programming language for writing programs that
evaluate expression syntax trees, and we give a meta-theorem that
connects such programs for finite-aspect checkable languages to finite
tree automata, which allows us to derive new decidable learning
results and decision procedures for several expression learning
problems by writing programs in the programming language.
\end{abstract}


\begin{CCSXML}
<ccs2012>
   <concept>
       <concept_id>10003752.10003766.10003772</concept_id>
       <concept_desc>Theory of computation~Tree languages</concept_desc>
       <concept_significance>300</concept_significance>
       </concept>
   <concept>
       <concept_id>10003752.10003790.10002990</concept_id>
       <concept_desc>Theory of computation~Logic and verification</concept_desc>
       <concept_significance>300</concept_significance>
       </concept>
   <concept>
       <concept_id>10010147.10010257.10010293</concept_id>
       <concept_desc>Computing methodologies~Machine learning approaches</concept_desc>
       <concept_significance>300</concept_significance>
       </concept>
 </ccs2012>
\end{CCSXML}

\ccsdesc[300]{Theory of computation~Tree languages}
\ccsdesc[300]{Theory of computation~Logic and verification}
\ccsdesc[300]{Computing methodologies~Machine learning approaches}

\keywords{exact learning, learning symbolic languages, tree automata, version space algebra,
program synthesis, interpretable learning}

\maketitle
\renewcommand{\shortauthors}{P. Krogmeier, P. Madhusudan}


\section{Introduction}
\label{sec:introduction}

We undertake a foundational theoretical exploration of the exact
learning problem for \emph{symbolic languages} with rich
semantics. Learning symbolic concepts from data has myriad
applications, e.g., in
verification~\cite{madhu-qda,ice,data-driven-chc,learning-nn-controllers,neider2020}
and, in particular, invariant synthesis for distributed
protocols~\cite{koenig-taming-search-space, aiken-fo-sep,
  parno-21,distAI}, learning properties of
programs~\cite{inferring-rep-invariants, preconditions,
  learning-contracts}, explaining executions of distributed
protocols~\cite{neider-ltl-learning}, and synthesizing programs from
examples or specifications
\cite{mil-muggleton,evans-greffen-noisy,flashfill,sketch,sygusJournal,handa-rinard,fta-data-completion-scripts,Wang2017,flashmeta}.

In this paper, \emph{symbolic languages} are construed as sets of
expressions together with formal syntax and semantics. Languages
include logics, e.g., first-order and modal logics, programming
languages (functional or imperative), query languages like
\textsc{SQL}, or even languages whose expressions describe other kinds
of languages, e.g., regular expressions or context-free grammars. In
the \emph{exact learning problem} for a language $\Ll$, the goal is to
find an expression $e\in\Ll$ that is consistent with a given finite
set of (positively and negatively) labeled examples, which in this
setting are \emph{finite structures}. The expression $e$ should be
satisfied by all positive structures and not satisfied by any negative
ones. The semantic notion, i.e. \emph{satisfaction}, varies by
problem.

\mypara{Decidable Learning} The languages we study are complex enough
that polynomial time learning is seldom possible (even learning the
simplest Boolean formula that separates a labeled set of Boolean
assignments to variables is not possible in polynomial
time~\cite{kearns-vazirani}). Furthermore, assuming the semantics of
expressions over structures is computable (true in all languages we
consider), there is always a trivial algorithm that \emph{enumerates}
expressions, evaluates them over the given structures, and finds a
consistent expression if one exists. Given that enumeration can take
exponential time in the size of the smallest consistent expression to
terminate (if it terminates at all), and given any learning algorithm
may require exponential time, a meaningful theoretical analysis of
learning for such languages is hard. We hence consider \emph{decidable
  learning}, where hypothesis classes are infinite and learning
algorithms must terminate with a consistent expression if one exists
or report there is none. Note the trivial enumerator is not a decision
procedure when there are infinitely-many expressions, since it may not
be able to report an instance has no solution.

\mypara{Learning under Syntactic Restrictions} We require learning
algorithms to both \emph{accommodate syntactic restrictions} over the
language, i.e., restrictions to the hypothesis space, and to be able
to find \emph{small} expressions. These stipulations mitigate
overfitting. For instance, in the case of some logics, the set of
positively-labeled structures can be precisely captured using a single
formula that can be computed efficiently given the structures. Such a
solution is not interesting and is unlikely to generalize. In such
cases, learning also becomes trivially decidable: if there is any
consistent expression, then the highly specific one will be
consistent. Accommodating syntactic restrictions and requiring small
solutions circumvent these issues. Note also that syntactic
restrictions are a \emph{feature}--- one can always allow expressions
to be learned from the entire language.

\mypara{Learning in Finite-Variable Logics} Our work draws inspiration
from a recent result that showed classical logics, e.g., first-order
logic ($\FO$), have decidable learning when restricted to use
finitely-many variables~\cite{popl22}. The technique underlying this
result uses \emph{tree automata}. For each positively-labeled
(respectively negatively-labeled) structure, one builds a tree
automaton that reads expression syntax trees and accepts those that
are true (respectively false) in that structure. Such automata are
akin to \emph{version space algebras}~\cite{mitchell97}, and taking
their product (and a product with an automaton capturing the syntactic
restriction) results in an automaton that accepts \emph{all}
consistent expressions. Existence of solutions can be decided with
automata emptiness algorithms, which can be used to synthesize small
consistent expressions if they exist.

\mypara{Contributions} We show that the tree automata-theoretic
technique for learning extends much beyond finite-variable logics. We
prove decidable learning for a number of languages that are not
finite-variable logics, and we give a meta-theorem that streamlines
the task of proving decidable learning for new languages. It reduces
proofs to the problem of writing an interpreter for the semantics in a
particular \emph{programming language}, which we call $\lang$. Using
this meta-theorem, we exhibit a rich set of examples that have
decidable learning:
\begin{itemize}
\item \emph{Modal logic} over Kripke structures
  (\Cref{sec:motiv-exampl-modal,sec:modal-expr-separ})
\item \emph{Computation tree logic} over Kripke structures
  (\Cref{sec:modal-expr-separ})\footnote{Details can be found in \betterappref{sec:ctl}.}
\item \emph{Regular expressions} over finite words (\Cref{sec:regular-expressions})
\item \emph{Linear temporal logic} over periodic words (\Cref{sec:ltl})
\item \emph{Context-free grammars} over finite words (\Cref{sec:cfg})
\item \emph{First-order queries} over tuples of rationals numbers with
  order (\Cref{sec:rationals})
\item \emph{String transformations} from input-output examples
  (\Cref{sec:discussion}), similar to~\cite{flashfill}
\end{itemize}
In each of these settings, the learning problem is decidable under
syntactic restrictions expressed by a (tree) grammar, which is given
as an input along with the sample structures.

We emphasize our contributions are theoretical. The programming
language itself is a notational tool that identifies and abstracts a
pattern we observe many times in this work, that of \emph{programming
  with two-way tree automata} to prove decidable learning and derive
decision procedures. The learning algorithms we obtain have high
complexity; implementing a compiler from the programming language to
efficient decision procedures for learning will involve heuristics
that depend on specific problems. We note that learning problems for
several of the languages we study are of practical interest, with
previous work exploring algorithms for regular
expressions~\cite{jagadish-regexp-learning-08,fernau-regexp-learning-09},
linear temporal logic~\cite{neider-ltl-learning}, context-free
grammars~\cite{sakakibara-cfg-learning,langley-cfg-learning,vanlehn87-cfg-learning},
and string transformations in Microsoft Excel's Flash
Fill~\cite{flashfill}.

\smallskip \mypara{Meta-theorem
} Each of the results above follows from a \emph{meta-theorem} which
says, intuitively, that languages are decidably learnable as long as
expressions can be evaluated over any structure using a particular
kind of program. More precisely, we require a \emph{semantic
  evaluator} $P$ that, given a structure $M$ and expression $e$,
evaluates $e$ over $M$ by navigating up and down on the syntax tree
for $e$ using recursion. Furthermore, $P$ must rely only on a finite
set of \emph{semantic aspects of the structure} for memory during its
navigation over $e$. This set depends on $M$ but not on $e$. As long
as we can write such an evaluator, the meta-theorem guarantees
decidable learning for the language!

The notion of \emph{semantic aspects} is quite natural in the settings
we consider. In logic, the semantics of formulas $\varphi$ over a
structure $M$ is often presented by recursion on $\varphi$, with some
additional information. For instance, the satisfaction relation
$M, \gamma \models \varphi$ for $\FO$ uses an interpretation of
variables $\gamma$. For $\FO$ with a fixed set of $k$ variables, the
number of such $\gamma$ is \emph{bounded}; it depends on the structure
$M$ but not on the formula $\varphi$, and hence meets the
finite-aspect requirement we identify in this paper. Consequently, the
result on decidable learning for finite-variable $\FO$ is an immediate
corollary of our meta-theorem. In fact, all such results for
finite-variable logics~\cite{popl22} are obtained as
corollaries\footnote{The tree automata underlying each result for
  finite-variable logics can be easily translated to semantic
  evaluators of the kind we require. See 
  \betterappref{sec:fo}~ for such a semantic evaluator for $\FO$.}.

Semantic aspects are sometimes obvious from standard semantic
definitions and sometimes less so. In \emph{modal logic}, the standard
semantics is defined recursively in terms of the semantics for
subformulas at different \emph{nodes} in a Kripke structure, and
indeed, the aspects in this case are simply the nodes. For
\emph{computation tree logic} (CTL), standard semantics in the
literature would use the nodes as for modal logic but would go beyond
recursion in the structure of expressions and use recursive
definitions to give meaning to formulas (least and greatest
fixpoints~\cite{mcmillan-thesis}). In this case, the aspects include
the nodes of the structure as well as a \emph{counter} that encodes a
recursion budget for \emph{until} and \emph{globally} formulas to be
satisfied, with the counter value bounded by the number of nodes. The
standard semantics for a \emph{regular expression} $e$ defines the
language $L(e)$ recursively in the structure of $e$. For a given word
$w$, however, we specialize the semantics of membership in $L(e)$ to
$w$, using aspects that correspond to subwords of $w$ (e.g., a pair of
indices $(i,j)$ marking the left and right endpoints of a subword,
with $1\le i \leq j \le |w|$). This membership semantics involves a
finite number of aspects which is quadratic in the size of $w$ but
independent of $e$. Semantics of \emph{linear temporal logic} (LTL)
formulas over periodic words $uv^\omega$ can also be defined
(non-standardly) using a set of aspects corresponding to each position
of $u$ and $v$, again finite. The semantics for membership of a word
$w$ in the yield of a \emph{context-free grammar} (restricted to a
finite set of nonterminals) can again be written with aspects
corresponding to subwords, as for regular expressions. But in this
case it also requires navigating the tree representing the grammar
\emph{up and down} many times in order to parse $w$, which requires
keeping some extra memory. Standard semantics for \emph{first-order
  queries} over rational numbers with order would involve an
interpretation of variables as rational numbers, but this set is of
course infinite. It turns out that a finite set of aspects encoding
the \emph{ordering} of the variables is sufficient to define semantics
in this setting.

The meta-theorem is a powerful tool for establishing decidable
learning. We emphasize that its proof is technically quite simple---
programs that navigate trees using recursion can be translated to
\emph{two-way alternating tree automata}, which can be converted to
one-way tree automata to obtain decision procedures for learning. Our
technical contribution lies more in the formalization of the technique
in terms of a programming language for semantic evaluators, and
realizations in different settings with (nonstandard) semantic
definitions involving a finite set of aspects.

We use the meta-theorem for the well-known application of learning
string transfomations from examples in the context of spreadsheet
programs. The seminal work of Gulwani~\cite{flashfill} established
this problem as one of the first important applications of program
synthesis from examples. We consider the language for string programs
used in that work and argue that even a signficant extension of that
language admits decidable learning. As far as we know, decidable
learning for this well-studied problem was not known earlier.

\mypara{Organization} In \Cref{sec:motiv-exampl-modal} we explore
learning in modal logic to motivate the generic learning algorithm
based on tree automata and the notion of semantic aspects. We discuss
a semantic evaluator for modal formulas and abstract the main pattern
as a program. \Cref{sec:notation-background} gives some background on
tree automata. In \Cref{sec:language} we define the class of
finite-aspect checkable languages (languages that admit decidable
learning), formalize a programming language for writing semantic
evaluators, and give the meta-theorem connecting semantic evaluators
to tree
automata. \Cref{sec:regular-expressions,sec:ltl,sec:cfg,sec:rationals}
establish decidable learning for regular expressions, linear temporal
logic, context-free grammars, and first-order queries over rationals
with order. In \Cref{sec:discussion} we discuss decidable learning for
string transformations. We review related work in
\Cref{sec:related-work} and conclude in \Cref{sec:conclusion}.


\section{Motivating Problem: Learning Modal Logic Formulas}
\label{sec:motiv-exampl-modal}
In this section, we show how to derive learning algorithms from
semantic evaluators for propositional modal logic. We make the
observation that a specific kind of semantic evaluator corresponds to
a constructive proof of decidable learning. Specifically, the
evaluators must use an amount of memory \emph{bounded by the
  structure} over which expressions are evaluated but independent of
expression size, beyond that afforded by the syntax tree itself. To
prove decidable learning for a new language, it suffices to program
such an evaluator. We summarize this main theme as follows:

\begin{theme}
  Effective evaluation using state bounded by structures
  $\,\,\,\xRightarrow{\hspace*{0.5cm}}\,\,$ decidable learning
  \label{slogan1}
\end{theme}

We next introduce the learning problem for modal logic and explore
this theme by developing a suitable semantic evaluator for modal
formulas over Kripke structures.

\subsection{Separating Kripke Structures with Modal Logic Formulas}
\label{sec:separ-modal}
Consider the following problem, with an example illustrated in
\Cref{modal-picture}.

\begin{problem}[Modal Logic Separation]
  Given finite sets $P$ and $N$ of finite pointed Kripke structures
  over propositions $\Sigma$, and a grammar $\grammar$, synthesize a
  modal logic formula $\varphi\in L(\grammar)$ that is true for
  structures in $P$ (the \emph{positives}) and false for those in $N$
  (\emph{negatives}), or declare none exist.
\end{problem}

\noindent We review some basics of modal
logic~\cite{blackburn_rijke_venema_2001}. The following grammar
defines the set of modal logic formulas over a finite set of
propositions $\Sigma$.
\begin{align*}
  \varphi\Coloneqq a\in\Sigma \grammarsep \varphi\wedge\varphi'
  \grammarsep \varphi\vee\varphi' \grammarsep \neg\varphi \grammarsep
  \MBoxx\varphi \grammarsep \MDiamm\varphi
\end{align*}

\tikzstyle{every node}=[draw=none, inner sep=0pt, minimum width=8pt, minimum height=8pt]

\begin{figure}[H]
  \hspace{0.1in}
    \begin{minipage}[c]{0.4\textwidth}
      \begin{tikzpicture}[thick,scale=0.65]

        \node[] (-1) at (0,0.85) {} ;
        \node[] (-2) at (4,0.85) {} ;

        \node[] (0) at (0,0) {$\{{\color{assassinred}a}\}$} ;
        \node[] (1) at (1,-1) {$\{{\color{assassinred}c}\}$} ;
        \node[] (2) at (0,-2) {$\{{\color{assassinred}v}\}$} ;
        \node[] (3) at (-1,-1) {$\{{\color{assassinred}c}\}$} ;

        \draw[fill=zgray] [-to] (-1) edge[zgray] (0) ;

        \draw[fill=zgray] [-stealth] (0) edge[zgray] (1) ;
        \draw[fill=zgray] [-stealth] (1) edge[zgray] (2) ;
        \draw[fill=zgray] [-stealth] (2) edge[zgray] (3) ;
        \draw[fill=zgray] [-stealth] (3) edge[zgray] (0) ;

        \draw[fill=zgray] [-stealth] (0) edge[zgray] (2) ;
        \draw[fill=zgray] [-stealth] (2) edge[zgray] (0) ;

        \node[] (4) at (4,0) {$\{{\color{assassinred}c}\}$} ;
        \node[] (5) at (5,-1) {$\{{\color{assassinred}c}\}$} ;
        \node[] (6) at (4.5,-2) {$\{{\color{assassinred}a}\}$} ;
        \node[] (7) at (5.5,-2) {$\{{\color{assassinred}c}\}$} ;
        \node[] (8) at (3,-1) {$\{{\color{assassinred}c}\}$} ;
        \node[] (9) at (3.5,-2) {$\{{\color{assassinred}c}\}$} ;
        \node[] (10) at (2.5,-2) {$\{{\color{assassinred}v}\}$} ;

        \draw[fill=zgray] [-to] (-2) edge[zgray] (4) ;

        \draw[fill=zgray] [-stealth] (4) edge[zgray] (5) ;
        \draw[fill=zgray] [-stealth] (4) edge[zgray] (8) ;
        \draw[fill=zgray] [-stealth] (5) edge[zgray] (6) ;
        \draw[fill=zgray] [-stealth] (5) edge[zgray] (7) ;
        \draw[fill=zgray] [-stealth] (8) edge[zgray] (9) ;
        \draw[fill=zgray] [-stealth] (8) edge[zgray] (10) ;
      \end{tikzpicture}
    \end{minipage}%
    \begin{minipage}[c]{0.15\linewidth}
      \begin{tikzpicture}[thick,scale=0.7]
        \node[draw=none,fill=none] (phi) at (0,0) {$\MBoxx(\MDiamm(\red\vee \blue))$} ;
        \node[draw=none,fill=none] (+) at (-0.1,1.4) {{\color{assassinred}$+$}} ;
        \node[draw=none,fill=none] (-) at (0.9,1.4) {{\color{assassinblue}$-$}} ;
        \draw [thick,dash pattern={on 2pt off 2pt on 2pt off 2pt}]
        (-0.4,-1.6) -- (-0.1,-0.4) ;
        \draw [thick,dash pattern={on 2pt off 2pt on 2pt off 2pt}]
        (0.1,0.4) -- (0.4,1.6) ;
      \end{tikzpicture}
    \end{minipage}
    \begin{minipage}[c]{0.3\textwidth}
      \centering
      \begin{tikzpicture}[thick,scale=0.65]
        \node[] (-1) at (1.25,0.85) {} ;
        \node[] (-2) at (3.75,0.85) {} ;

        \node (cab) at (1.25,0) {$\{{\color{assassinblue}a}\}$} ;
        \node (cab2) at (1.25,-1.5) {$\{{\color{assassinblue}v}\}$} ;
        \draw[fill=zgray] [-stealth] (cab) edge[zgray] (cab2) ;

        \draw[fill=zgray] [-to] (-1) edge[zgray] (cab) ;

        \node (0) at (3.75,0) {$\{{\color{assassinblue}v}\}$} ;
        \node (1) at (4.5,-1.5) {$\{{\color{assassinblue}a}\}$} ;
        \node (2) at (3,-1.5) {$\{{\color{assassinblue}c}\}$} ;

        \draw[fill=zgray] [-to] (-2) edge[zgray] (0) ;

        \draw[fill=zgray] [-stealth] (0) edge[zgray] (1) ;
        \draw[fill=zgray] [-stealth] (1) edge[zgray] (2) ;
        \draw[fill=zgray] [-stealth] (2) edge[zgray] (0) ;
      \end{tikzpicture}
    \end{minipage}

    \caption{The modal logic formula
      $\varphi =\MBoxx(\MDiamm (\red\vee\blue))$ over
      $\Sigma=\{a,c,v\}$ is true for the two Kripke structures on the
      left and false for the two on the right. Starting nodes $s$ are
      on top with incoming arrows. 
    }
  \label{modal-picture}
\end{figure}

The standard semantics of modal logic is reproduced below. Formulas
are interpreted against (in our case finite) pointed Kripke structures
$G=(W,s,E,P)$, where $W$ is a set of nodes (or \emph{worlds}), $E$ is
a binary 
neighbor 
relation on $W$, and $P : W\rightarrow \Pp(\Sigma)$ is a function that
labels each node by the set of all atomic propositions that hold
there. A formula $\varphi$ is true in $G=(W,s,E,P)$, written
$G\models\varphi$, if it is true starting from $s$, written
$G,s\models\varphi$, with the latter notion defined as follows.

\vspace{0.1in}
\begin{minipage}[t]{0.6\textwidth}
\centering
\begin{tabularx}{0.6\textwidth}{l l l l l}
  $G,w$ & $\models$ & $a\in \Sigma$ \qquad\qquad & if &$a\in P(w)$ \\
  $G,w$ & $\models$ & $\neg\varphi$ \qquad\qquad & if &
  $G,w\not\models\varphi$ \\
  $G,w$ & $\models$ & $\varphi\wedge\varphi'$ \qquad\qquad & if &
  $G,w\models \varphi$ \text{ and } $G,w\models \varphi'$ \\
  $G,w$ & $\models$ & $\varphi\vee\varphi'$ \qquad\qquad & if &
  $G,w\models \varphi$ \text{ or } $G,w\models \varphi'$ \\
  $G,w$ & $\models$ & $\MBoxx \varphi$ \qquad\qquad & if &
  $G,w'\models \varphi$ \text{ for all} $w'$ \text{such that } $E(w,w')$\\
  $G,w$ & $\models$ & $\MDiamm \varphi$ \qquad\qquad & if &
  $G,w'\models \varphi$ \text{ for some} $w'$ \text{such that } $E(w,w')$
\end{tabularx}
\end{minipage}
\vspace{0.1in}

Observe that there are \emph{infinitely-many} inequivalent modal
formulas. Indeed, the sequence
\begin{align*}
\MDiamm a,\, \MDiamm (\MDiamm a),\, \MDiamm (\MDiamm (\MDiamm a)),\, ...
\end{align*}
defines an infinite set $(\varphi_i)_{i\in\Nat^{\mathbb{+}}}$ of
inequivalent formulas. For $i \in \Nat^{\mathbb{+}}$, a finite graph
consisting of a single directed path of length $i-1$ makes the formula
$\varphi_i$ false while making all $\varphi_j$ true for $j < i$. Thus
the search space of modal formulas is infinite, and so we cannot
resort to enumeration for decidable learning in modal logic.

We advocate an automata-theoretic technique for learning problems,
which is inspired by recent work on learning formulas in
finite-variable logics~\cite{popl22}:
\begin{enumerate}
\item Encode language expressions as syntax trees over a finite
  alphabet.
\item\label{aut-bullet} For each structure $p \in \Pos$ (respectively, $n\in\Neg$),
  construct a tree automaton accepting the syntax trees for
  expressions $e$ such that $p \models e$ (respectively,
  $n\not\models e$).
\item\label{aut-bullet2} Construct a tree automaton accepting the intersection of
  languages for automata from (\Cref{aut-bullet}), which accepts all
  expressions consistent with the examples.
\item Run an emptiness checking algorithm for the automaton from
  (\Cref{aut-bullet2}) to synthesize a (small) expression, or, if the
  language is empty, report \emph{unrealizable}.
\end{enumerate}

\noindent The procedure above adapts easily to learning with grammar
restrictions. Given a regular tree grammar $\grammar$, we can
construct a tree automaton accepting precisely the expressions allowed
by $\grammar$ and take its product with the automaton from
(\Cref{aut-bullet2}) before checking emptiness.

The crucial observation we make is that in order to apply this generic
procedure to learning problems for new languages, we need only
implement an \emph{evaluator\,} for the semantics of the language. For
any \emph{fixed structure $M$}, the evaluator checks whether
$M\models e$ for an input expression $e$, where $\dblqt{\models}$ is a
problem-specific semantic relationship. Using the evaluator, we can
compute the tree automaton for each given positive and negative
structure and proceed with the algorithm above.

We can view these semantic evaluators as \emph{programs} whose state
depends on the mathematical structure over which evaluation occurs but
depends only to a very small degree on the size of the expression
itself. The key to finding these programs is to consider the question
of how to interpret arbitrary input expressions from the language
(presented as syntax trees) against an arbitrary, but \emph{fixed},
structure. We invite the reader in the remainder of the section to
na{\"i}vely explore how to write a program that evaluates an input
modal logic formula $\varphi$ against a fixed Kripke structure by
traversing the syntax tree of $\varphi$.

\subsection{Evaluating Modal Formulas on Fixed Kripke Structures}
\label{sec:eval-modal-form}
We want a procedure for evaluating any formula $\varphi$ of modal
logic against a fixed Kripke structure $G=(W,s,E,P)$, where
\emph{evaluate} means \emph{verify} that $G\models \varphi$. The
evaluator hence is designed for any particular $G$ and takes the syntax tree of $\varphi$ as input.

Imagine we want to evaluate the formula
$\varphi=\MBoxx(\MDiamm(\red\vee \blue))$ from \Cref{modal-picture}
over the rightmost (tree-shaped) positive structure $G$. In
particular, we want to check whether $G,s\models \varphi$ holds by
traversing the syntax tree for $\varphi$ (displayed on the right
below) from the top down. Suppose $n$ is a pointer into the syntax
tree of $\varphi$, with $n$ initially pointing to the root. We first
read the symbol $\singqt{\MBoxx}$, and we recognize that
$G,s\models \varphi$ holds exactly when the subformula
$\MDiamm(\red\vee\blue)$ holds at each of the two children of $s$ in
$G$.
\begin{wrapfigure}{r}{0.13\textwidth}
  \vspace{0.1in}
  \centering
  \tikzstyle{every node}=[draw=none, inner sep=1.5pt, minimum width=0pt]
  \begin{tikzpicture}[scale=1,draw=none,thick,minimum width=0pt,inner sep=1.5pt]
    \node (box) at (0,0) {$\MBoxx$} ;
    \node (diam) at (0,-0.6) {$\MDiamm$} ;
    \node (or) at (0,-1.2) {$\vee$} ;
    \node (red) at (-0.5, -1.5) {$\red$} ;
    \node (blue) at (0.5, -1.5) {$\blue$} ;
    \draw [-] (box) edge[black] (diam) ;
    \draw [-] (diam) edge[black] (or) ;
    \draw [-] (or) edge[black] (red) ;
    \draw [-] (or) edge[black] (blue) ;
  \end{tikzpicture}
  \newline
\end{wrapfigure}
Let $w_1$ and $w_2$ stand for the children of $s$, and let
$\child_i(n)$ stand for the $i^{\mathit{th}}$ child of the syntax tree
pointed to by $n$. We now should recursively check whether
$G,w_1\models \MDiamm(\red\vee\blue)$ and
$G,w_2\models \MDiamm(\red\vee\blue)$ hold. To do this, we move
\emph{down} in the syntax tree by setting
$n\coloneqq \child_1(n) = \MDiamm(\red\vee\blue)$. We then need to
check $G,w_1'\models \red\vee\blue$ holds, where $w_1'$ is either the
left or right child of $w_1$ in $G$ (and likewise for $w_2$). Suppose
we \emph{nondeterministically guess} that
$G,w_1'\models \red\vee\blue$ holds with $w_1'$ being the left child
of $w_1$. We move \emph{down} once more by setting
$n\coloneqq \child_1(n) = \red\vee\blue$ and we \emph{verify} the
guess by checking $G,w_1'\models \red\vee\blue$, which plays out in a
similar way, with the traversal terminating and returning true because
$G,w_1'\models \blue$ holds, since $\blue\in P(w_1')$.

Note that the steps described above work for arbitrarily large
$\varphi$; indeed, each next step is determined by the current symbol
of the syntax tree and by some state, namely, the set of nodes $W$,
that depends entirely on $G$. Since $G$ is finite, so too is this
set. Observe also that the traversal required some computable
functions specific to Kripke structures. For example, we needed to
compute the assignment $P : V\rightarrow\Pp(\Sigma)$, membership for
$\Pp(\Sigma)$, and the set of $E$-neighbors of a given node.

\subsection{A Program for Evaluating Modal Formulas}
\label{sec:progr-eval-modal}
We conclude the example by writing a \emph{program} which captures our
traversal of $\varphi$ and the computation of whether
$G\models \varphi$. The program takes as inputs the structure $G$,
some auxiliary state $w$, and a pointer $n$ that initially points to
the root of the syntax tree for a modal formula.

The program $\modal$ is shown in \Cref{fig:modal-clause1}. 
We discuss formal semantics for such programs in \Cref{sec:language}.
Intuitively, the program implements the traversal sketched earlier by
first matching against the symbol $n.\lab$ which labels the current
node of the syntax tree. Depending on the symbol, it can then either
terminate by computing a Boolean function as its final answer (e.g.
\dblqt{$x\in P(w)$}) or it can combine the results of recursive calls
at nearby nodes on the syntax tree. It uses \linl{all} and \linl{any}
to represent finite conjunctions and disjunctions, and it uses a few
problem-specific computable functions, which we categorize as either
\emph{Boolean} functions or \emph{state} functions. The only Boolean
function in this case is for atomic propositions, i.e.
\dblqt{$z\in P(w)$}, and the only state function is for computing the
neighborhood of a given node in $G$, i.e.
\dblqt{$\{y\in G \,:\, E(w,y)\}$}. Negation in
\Cref{fig:modal-clause1} is handled by evaluating the negated
subformula in a dual state $\no(w)$, in which each part of the program
is interpreted as its dual, e.g., \linl{and} becomes \linl{or},
\emph{etc}. We return to these details in \Cref{sec:language}.

\begin{figure}
  \centering
  \begin{minipage}[c]{0.85\linewidth}
\begin{lstlisting}
  Modal($G$, $w$, $n$) = match $n.\lab$ with
      $\wedge$  -> Modal($G$, $w$, $n.\child_1$) and Modal($G$, $w$, $n.\child_2$)
      $\vee$  -> Modal($G$, $w$, $n.\child_1$) or Modal($G$, $w$, $n.\child_2$)
      $\neg$  -> Modal($G$, $\no(w)$, $n.\child_1$)
      $\MBoxx$  -> all (LAM$z$. Modal($G$, $z$, $n.\child_1$)) $\{y\in G \,:\, E(w,y)\}$
      $\MDiamm$  -> any (LAM$z$. Modal($G$, $z$, $n.\child_1$)) $\{y\in G \,:\, E(w,y)\}$
      $x$  -> $x\in P(w)$
\end{lstlisting}
  \end{minipage}
  \caption{$\modal$ evaluates modal formula $\varphi$ pointed to by
    $n$ against Kripke structure $G$ and checks $G\models\varphi$.}
  \label{fig:modal-clause1}
\end{figure}

Recall the theme from the beginning of this section:

\begin{theme}
  Effective evaluation using state bounded by structures
  $\,\,\,\xRightarrow{\hspace*{0.5cm}}\,\,$ decidable learning
  \label{slogan1}
\end{theme}

\noindent The evaluator for modal formulas uses auxiliary states that
depend on the number of nodes in the Kripke structure, and \emph{not}
on the size of the syntax tree. Strictly speaking, it accesses the
syntax tree using a pointer, and hence involves some minimal amount of
memory that depends on expression size, but this is the only such
dependence.

As we have just observed, effective evaluation of this sort is
possible for modal logic on finite Kripke structures, and programs
witnessing this fact like the one in \Cref{fig:modal-clause1} imply
decision procedures for learning. In the remainder of the paper, we
define a class of languages with decidable learning and formalize the
programming language for evaluators as well as a meta-theorem which
reduces proofs of decidable learning to the task of programming
evaluators. We write such programs to obtain results for several other
learning problems.


\section{Preliminaries}
\label{sec:notation-background}

Here we review some background on syntax trees, tree grammars, and
tree automata.

\subsection{Syntax Trees and Tree Grammars}
\label{sec:terms}

For each symbolic language in this paper we use a \emph{ranked
  alphabet} to form expression syntax trees. A ranked alphabet
$\Delta$ is a set of symbols $s$ equipped with a function
$\arity{s}\in\Nat$. For example, the ranked alphabet for modal
formulas over $\Sigma$ has $\arity{\MDiamm}=1$, $\arity{\wedge}=2$,
and $\arity{a}=0$ for each $a\in\Sigma$. We write $T_\Delta$ for the
set of $\Delta$-terms, or ($\Delta$-)\emph{syntax trees}, which is the
smallest set containing symbols of arity $0$ from $\Delta$ and closed
under forming new terms with symbols of greater arities. We write
$T_\Delta(X)$ for the set of $\Delta$-terms constructed with a fresh
set of nullary symbols $X$.

We use regular tree grammars to express syntax restrictions for
learning problems.  A \emph{regular tree grammar} is a tuple
$\grammar = (\nt, \Delta, S, P)$ consisting of a finite set of
nonterminals $\nt$, ranked alphabet $\Delta$, starting nonterminal
$S\in \nt$, and productions $P$. Each production has the form
``$A \rightarrow t$'', with $A\in \nt$ and $t\in
T_\Delta(\nt)$. 
We associate with the productions $P$ a reflexive and transitive
rewrite relation $\rightarrow_P^*$ on terms $T_\Delta(\nt)$
, and the language $L(\grammar)$ is the set
$\left\{t\in T_\Delta(\emptyset)\,\mid\, S\rightarrow_P^*
  t\right\}$. See~\cite{tata} for details.

\subsection{Tree Automata}
\label{sec:tree-automata}

Tree automata are finite state machines that operate on trees. We will
use tree automata that operate on finite trees, formally \emph{terms},
as described in \Cref{sec:terms}. Such automata are tuples
$\aut = (Q,\Delta,q_i,\delta,F)$ consisting of a finite set of states
$Q$, ranked alphabet $\Delta$, initial state $q_i\in Q$, transition
function $\delta$, and acceptance condition $F$. An automaton accepts
a tree $t\in T_\Delta$ if it has an \emph{accepting run} over $t$.
The notions of \emph{run} and \emph{accepting run} can vary.

In this work we use a convenient, though no more expressive, variant
of tree automata called an \emph{alternating two-way tree
  automaton}. Such automata walk up and down on their input tree and
branch using alternation to send copies of the automaton in updated
states to nearby nodes of the tree. We will only use
\emph{reachability} acceptance conditions in this paper, where
$F\subseteq Q$, and a tree is accepted if along every trajectory of
the automaton during its walk over the tree, it reaches a state in
$F$. We omit the formal definition of \emph{runs} for these automata,
which is entirely standard, though complicated, and unnecessary for
understanding our results.

For a symbol $s\in \Delta$ with $\arity{s}=k$ and state $q\in Q$, the
available transitions for a two-way alternating tree automaton are
described by a Boolean formula 
\begin{align*}
  \delta(q,s) \in \mathcal{B}^{\mathtt{+}}(Q\times\{-1,0,\ldots,k\}),
\end{align*}
where $\mathcal{B}^{\mathtt{+}}(X)$ means the set of positive Boolean
formulas over variables from a set $X$. 
Each variable $(q,m)$ represents a new state $q$ and direction $m$ to
take at a particular node in the tree, with $m=-1$ being a move
\emph{up} to the parent of the current node, $m=0$ meaning to
\emph{stay} at the current node, and the other numbers being moves
down into one of $k$ children. A subset of $Q\times\{-1,0,\ldots, k\}$
corresponds to a Boolean assignment, and the automaton can proceed
according to any assignment that satisfies the current transition
formula. For example, if the automaton reads symbol $h$ in state $q$,
the transition
\begin{align*}
  \delta(q,h)= ((q_1,1)\wedge(q_2,1))\vee
  ((q_1,2)\wedge(q_2,0)\wedge(q_1,-1))
\end{align*}
would allow either of the following: (1) continuing in states $q_1$
and $q_2$, each starting from the leftmost child, or (2) continuing
from $q_1$ in the second child from left, from $q_2$ at the current
node, and from $q_1$ in the parent.

Two-way alternating tree automata can be converted to one-way
nondeterministic tree automata with an exponential increase in
states~\cite{two-way-vardi,cachat-two-way}, and so they inherit
closure properties and standard decision procedures. In particular,
the emptiness problem can be solved in exponential time and a small
tree in the language can be synthesized in the same amount of time
when nonempty.  See~\cite{tata} for details.


\section{Meta-Theorem for Decidable Learning}
\label{sec:language}

In this section we define a rich class of languages for which
decidable learning is possible. We then develop a meta-theorem which
reduces proofs of decidable learning to writing semantic evaluators in
a programming language, which we call $\lang$\footnote{$\lang$ stands
  for \underline{f}inite \underline{a}spect \underline{c}heckers of
  \underline{e}xpression \underline{t}rees.}. The decision procedures
involve an effective translation of $\lang$ programs into two-way
alternating tree automata that read syntax trees. After defining the
class (\Cref{sec:decid-learn-via}), we discuss the syntax and
semantics of $\lang$ (\Cref{sec:syntax-semantics}), followed by the
meta-theorem (\Cref{sec:main-lemma}), which says that any language
whose semantics can be evaluated by a $\lang$ program has decidable
learning. We then apply this theorem to show decidable learning for
modal logic and computation tree logic (\Cref{sec:modal-expr-separ}).

\subsection{A Class of Languages with Decidable Learning}
\label{sec:decid-learn-via}
There is a surprisingly rich set of languages that have decidable
learning via essentially one generic decision procedure, which we
describe here. For our purposes, a \emph{language} consists of a set
of expressions $\Ll$\footnote{We often abuse notation and do not
  distinguish between expressions $e\in\Ll$ and the syntax trees for
  $e$.}, a class of finitely-representable structures $\Mm$ over the
same signature, and a semantic function that interprets a structure
and an expression in some domain $D$, written with a turnstile as
$(\_\models\_) : \Mm\times\Ll\rightarrow D$. Sometimes we just use
$\Ll$ to refer to such a symbolic language.

The decision procedure relies on building a tree automaton that
accepts the set of all (syntax trees for) expressions $e\in\Ll$ that
are consistent with a given example. In a supervised learning
scenario, with $D=\Bool\coloneq \{\tru,\fals\}$ and examples modeled
as pairs $(M,b)\in \Mm\times \Bool$, one builds a tree automaton
$\aut(M,b)$ such that
$L(\aut(M,b))=\{e\in\Ll \,:\, M \models e = b\}$. For a finite set of
examples $E=(M_i,b_i)_i$, we take the product
$\aut(\mathit{E}) = \bigwedge_i \aut(M_i,b_i)$. Given an automaton
$\aut(\grammar)$ that accepts syntax trees conforming to a tree
grammar $\grammar$, we construct the product
$\aut(E) \wedge \aut(\grammar)$ and run an emptiness algorithm on the
result to synthesize a tree in the language if one exists.

The crucial requirement above is to be able to build $\aut(M,b)$ for
any $M$, which is an automaton that acts as an evaluator for the
language over $M$. This is possible when the semantics of a language
is definable in terms of a \emph{finite} amount of auxiliary
information, which may depend (sometimes wildly) on the particular
structure $M$ but not on the expression size. We refer to such
auxiliary semantic information as \emph{semantic aspects}, or just
\emph{aspects}, and we call languages for which evaluators can be
implemented using tree automata \emph{finite-aspect
  checkable}\footnote{
    \emph{Checkable} refers to the \emph{model
    checking} problem for a logic, i.e. checking whether
  $M\models\varphi$ for a structure $M$ and formula $\varphi$.
}.

\begin{definition}[Finite-Aspect Checkable Language]
  A language $(\Mm,\Ll,\models)$ is \emph{finite-aspect checkable}
  (FAC) if for every $(M,d)\in\Mm\times D$ there is a tree automaton
  $\aut(M,d)$ over syntax trees for $\Ll$ such that
  $L(\aut(M,d)) = \{e\in \Ll \,:\, M\models e = d\}$, and the mapping
  $(M,d) \mapsto \aut(M,d)$ is computable.
\end{definition}
\noindent Note that \emph{all} FAC languages have decidable learning
by the generic algorithm described above.

FAC languages only require the automata to be computable given
$(M,d)\in\Mm\times D$, but all examples we have considered in fact
have small \emph{witnesses} for being FAC: the tree automata can be
described compactly by a \emph{program} that evaluates an input syntax
tree against an input structure. We next describe the programming
language $\lang$, which abstracts the common features of such
programs. Our meta-theorem relies on a simple procedure that takes a
program $P\in\lang$, a structure $M\in\Mm$, and a domain element
$d\in D$, and computes the tree automaton $\aut(M,d)$.

\subsection{Syntax and Semantics of $\lang$}
\label{sec:syntax-semantics}
We present the syntax and semantics of $\lang$ by way of example. We
omit many details that are not important for understanding later
sections and results; details, including formal semantics of $\lang$,
can be found in \betterappref{sec:app-language}.

Programs in $\lang$ are parameterized by a symbolic language
$(\Ll,\Mm,\models)$. A program $P$ takes as input a \emph{pointer}
into the syntax tree for an expression $e\in\Ll$ as well as a
structure $M\in\Mm$. The program navigates up and down on $e$ using a
set of pointers to move from children to parent and parent to children
in order to evaluate the semantics of $e$ over the structure $M$ and
verify that $M\models e = d$ for some $d\in D$. To write a $\lang$
program we first specify two things: (1) the symbolic language $\Ll$
over which the program is to operate and (2) the program's auxiliary
state, which corresponds to the semantic aspects of $\Ll$. Part (1)
involves specifying (1a) the syntax trees for $\Ll$ in terms of a
ranked alphabet $\Delta$ and (1b) the signature for structures $\Mm$,
which is a set of functions used to access the data for any given
$M\in\Mm$. The set of auxiliary states of part (2), which we denote by
$\Asp$, will typically be infinite. For a fixed structure $M\in\Mm$,
however, programs will only use a finite subset $\Asp(M)\subset\Asp$,
provided the symbolic language $\Ll$ is FAC, and so we will only need
to specify $\Asp(M)$ for an arbitrary fixed $M$.

For example, consider the program \linl{Modal} for modal logic in
\Cref{fig:modal-clause1}. Part (1): the symbolic language is modal
logic over finite pointed Kripke structures $G=(W,s,E,P)$ with
propositions $\Sigma$. We fix any straightforward representation of
formulas as syntax trees, and we use two Kripke structure-specific
functions for interpreting modal logic. The first computes the
neighborhood $\{y\in G\,:\, E(w,y)\}$ of a given node $w\in W$, and
the second computes whether a given proposition $x\in\Sigma$ is true
at a given node $w\in W$, i.e. whether $x\in P(w)$ holds. Part (2):
the states $\Asp(M)$ are the nodes of the Kripke structure, i.e. the
set $W$.

\subsubsection{Syntax}
\label{sec:syntax-1}
The formal syntax for $\lang$ programs is shown in \Cref{fig:lang}. A
program $P\in\lang$ consists of a set of \emph{clauses}, which we
denote by $\clauses(P)$, or just $\clauses$, each of which has the
form
\begin{center}
\linl{$P$($M$,\ $\sigma(z)$,\ $n$) = match\ $n.\lab$\ with\ $\ldots$}
\end{center}
The parameter $M$ is a mathematical structure (e.g. a Kripke
structure) and the parameter $n$ is a pointer into a syntax tree
(e.g. for a modal logic formula). The parameter $\sigma(z)$ is a
\emph{pattern} (e.g. a variable $w$ matching any node of a Kripke
structure). We treat both the ranked alphabet $\Delta$ and auxiliary
states $\Asp(M)$ as algebraic data types and allow $\lang$ programs to
pattern match over these using expressions from two sets of patterns,
alphabet patterns $\pat(\Delta)$ and state patterns $\pat(\Asp)$. For
example, for \linl{Modal} in \Cref{fig:modal-clause1}, the (trivial)
state pattern $w\in\pat(\Asp)$ will match any node of the input Kripke
structure, and the (trivial) alphabet pattern $x\in\pat(\Delta)$ will
match any of the modal logic propositions in $\Sigma$.

Each clause in $C$ has a single \linl{match} statement, consisting of
a list of \emph{cases}, each of the form \dblqt{\linl{$\alpha_i(z)$ ->
    $\,\,e$}}, with an alphabet pattern on the left and an expression
on the right. Expressions $e$ represent Boolean functions, possibly
involving results of recursive calls \dblqt{\linl{P($M$, $\sigma(z)$,
    $n.\dir$)}} that start in new states $\sigma(z)$ at nearby nodes
$n.\dir$ on the syntax tree. Compound expressions are built using
\linl{and}, \linl{or}, \linl{all}, \linl{any}, and \linl{if}. For
example, in \Cref{fig:modal-clause1}, the first two cases involve
recursive calls at the two children ($n.\child_1$ and $n.\child_2$) of
the current node, in the same state $w$, and return, respectively, the
conjunction and disjunction of the results.

The sets $S$ and $B$ in \Cref{fig:lang} categorize the signature
functions for structures $\Mm$ as one of two kinds.
\emph{\underline{S}tate} functions $g\in S$ are used to compute new
states for the program, e.g. computing the neighborhood of a given
node in a Kripke structure. These functions are used in \linl{any} and
\linl{all} expressions to bind parts of the structure $M$ to
variables. \emph{\underline{B}oolean} functions\footnote{We assume the
  set of Boolean functions $B$ is closed under complement.} $f\in B$
are used in \linl{if} expressions as well as for base cases in
recursion, e.g. computing membership in the set of true propositions
for each node of a Kripke structure. Note that for all the symbolic
languages in this paper, the signature functions are evidently
computable.

\begin{figure}
\begin{tabularx}{0.95\textwidth}{l c l}
  $\mathit{Prog}$ & \Prod{} & $\left\{\, \mathit{Clause}\,\ldots\, \mathit{Clause} \,\right\}$ \\
  $\mathit{Clause}$ & \Prod{} & \linl{P($M$,\ $\sigma(z)$,\ $n$) = match\ $n.\lab$ with\ $\mathit{Cases}$} \\
  $\mathit{Cases}$ & \Prod{} & \linl{$\alpha_1(z)$ ->\ $e_1$\ $\ldots$\ $\alpha_n(z)$ ->\ $e_n$}
\end{tabularx}

\begin{tabularx}{0.95\textwidth}{@{}r@{\ }c@{\ }c@{\ }l@{\hspace*{2em}}c@{\ \ \ }l@{\ \ \ }@{\hspace*{2em}}c@{\ \ \ }l}
\\
  $e$ &        &\Prod{} & \linl{True} & | & \linl{False} & | & {\lstinline!$f$($z$)!} \\
  &        & | & {\lstinline|$e_1$ and $\ e_2$|} & | & {\lstinline|$e_1$ or $\ e_2$|} & | & {\linl{P($M$,\ $\sigma(z)$,\ $n.\dir$)}} \\
  &        & | & {\lstinline|all ($\lambda x$.\ $e$) $\
    g$($z$)|} & | & {\lstinline|any ($\lambda x$.\ $e$) $\
    g$($z$)|} & | & {\lstinline|if $\ f(z)$ then $\ e_1$ else $\ e_2$|}
  \\
\end{tabularx}

\begin{tabularx}{0.97\textwidth}{c c c c c c}
  \\
  $\alpha(z) \in \pat(\Delta)$
  &
  &
  $\sigma(z) \in \pat(\Asp)$
  &
  &
  $f \in B, \,\, g \in S\quad$
  & $\dir \in \{\mathit{\up,\stay,\child_1,...,\child_k}\}$ \\
\end{tabularx}
\caption{Syntax for $\lang$ programs.  We use $x$ to denote a single
  variable and $z$ to denote a vector of variables.  }
\label{fig:lang}
\end{figure}

\subsubsection{Semantics}
\label{sec:semantics-1}

Given a structure $M$, state $\sigma\in\Asp(M)$, and pointer $n$, a
program operates by first determining the clause whose state pattern
$\sigma(z)$ matches $\sigma$. We consider only \emph{well-formed}
programs in which every state is matched by the state pattern of
precisely one clause
. After the unique matching clause is determined, the program matches
the symbol $n.\lab$ labeling the current node of the syntax tree
against the alphabet patterns. It then evaluates the expression on the
right side of the first matching case and returns a Boolean result,
either success or failure.

A formal semantics for $\lang$ can be found in
\betterappref{sec:app-language}. It defines a predicate $\Downarrow_p$ on program
configurations of the form $(M,n,\clauses(P),\sigma)$. The assertion
$(M, n, \clauses(P),\sigma) \Downarrow_p$ has the following meaning:
the program consisting of clauses $\clauses(P)$ terminates with
success (that is, it computes ``true'') on the syntax tree pointed to
by $n$, when started from state $\sigma$ and working over the
structure $M$.

Consider the operation of the program \linl{Modal} from
\Cref{fig:modal-clause1} over the tree-shaped positive Kripke
structure from \Cref{modal-picture} and the formula
$\varphi =\MBoxx(\MDiamm(a\vee v))$. Suppose $n$ is pointing at the
root of $\varphi$, with label $\MBoxx$, and the current state is
$s\in W$ (the top node of the Kripke structure). The state $s$ matches
the state pattern of the clause depicted in \Cref{fig:modal-clause1},
and the variable $w$ is bound to $s$. Next, the symbol $n.\lab=\MBoxx$
matches the fourth case of the \linl{match} statement, and the program
evaluates the \linl{all} expression. To do this, it uses a state
function to compute the neighborhood
$N(s) = \{y \in G \,:\, E(s,y)\}$. It then returns the conjunction of
results for recursive calls obtained by evaluating \linl{Modal($G$,
  $z$, $n.\child_1$)} with $z$ bound to the states of $N(s)$. This
involves two recursive calls with pointer $n.\child_1$ corresponding
to the formula $\MDiamm(a\vee v)$ in states corresponding to the two
neighbors of $s$. Each of these recursive calls involves evaluating
the \linl{any} expression for the $\MDiamm$ case, which in turn
involves evaluating the expression for the $\vee$ case. Finally, when
the program encounters either of the propositions $a,v\in\Sigma$ in
the syntax tree, the last case will match, and the program uses a
function to compute $x\in P(w)$, i.e. whether the proposition bound to
$x$ is true at the current state.

\subsection{Meta-Theorem}
\label{sec:main-lemma}
We now connect $\lang$ programs with automata to give our meta-theorem
for decidable learning. The proof relies on the following lemma, which
states that programs in $\lang$ that use finite auxiliary states for
any structure can be translated to two-way alternating tree automata.

\begin{lemma}
  Let $(\Ll,\Mm,\models)$ be a symbolic language and $\Delta$ an
  alphabet for $\Ll$. Let $P$ be a well-formed $\lang$ program over
  $(\Ll,\Mm,\models)$ with computable signature functions and with
  $\Asp(M)$ finite for every $M\in\Mm$. Then for every $M\in\Mm$ and
  $\sigma\in\Asp(M)$, we can compute a two-way alternating tree
  automaton $\aut(P,M,\sigma) = (Q, \Delta, q_i, \delta, F)$ such that
  for every $\,e\in\Ll$, we have $e\in L(\aut(P,M,\sigma))$ if and
  only if $(M, \route(e), \clauses(P),\sigma) \Downarrow_p$.
  \label{lemma:mainlemma-lemma}
\end{lemma}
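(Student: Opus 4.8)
The plan is to fix a structure $M \in \Mm$ and read off the two-way alternating tree automaton directly from the clauses $\clauses(P)$, exploiting the fact that the program's navigation up and down the syntax tree with finite auxiliary memory is precisely the computational model of such an automaton. Since $\Asp(M)$ is finite, I would take the state set $Q$ to consist of a copy of $\Asp(M)$ together with two distinguished sink states $\acc \in F$ and $\rej \notin F$ that encode ground Boolean outcomes, and set the initial state $q_i$ to the state for the input aspect $\sigma$. The guiding invariant is that the automaton, positioned at a node $n$ in the state for an aspect $\tau$, should have an accepting run exactly when $(M, n, \clauses(P), \tau) \Downarrow_p$.

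The transition function $\delta$ is defined by a syntax-directed translation of the right-hand-side expressions. Reading symbol $s$ at a node while in the state for $\tau$, I first use well-formedness to select the unique clause whose state pattern matches $\tau$, then match $s$ against the alphabet patterns to select the first applicable case; this fixes a right-hand-side expression $e$, which I compile into a positive Boolean formula in $\mathcal{B}^{\mathtt{+}}(Q \times \{-1,0,\ldots,k\})$. The compilation is compositional: \linl{and} and \linl{or} map to $\wedge$ and $\vee$; a recursive call $P(M, \tau'(z), n.\dir)$ maps to the single variable $(q', m)$, where $q'$ is the state for the concrete aspect $\tau'$ and $m$ encodes $\dir$; and \linl{all} and \linl{any} expand into finite conjunctions and disjunctions ranging over the finite, computable set returned by the state function, one recursive-call variable per element. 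The key point making this well defined is that, since the automaton state records the concrete aspect $\tau$, every variable bound by the pattern matches is ground, so the guards of \linl{if} expressions and the base Boolean functions $f(z)$ can be \emph{evaluated at construction time} via the computable signature functions; a base case or guard evaluating to true contributes the leaf $(\acc, 0)$, and one evaluating to false contributes $(\rej, 0)$. Dual (negated) aspects already live inside $\Asp(M)$, so the swapping of conjunction and disjunction under negation is inherited from the operational semantics and needs no special treatment.

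With $\delta$ in hand, correctness reduces to proving that $e \in L(\aut(P, M, \sigma))$ iff $(M, \route(e), \clauses(P), \sigma) \Downarrow_p$, which I would derive from the stronger invariant, proved by induction: for every node $n$ and aspect $\tau$, the automaton has an accepting run from the state for $\tau$ at $n$ iff $(M, n, \clauses(P), \tau) \Downarrow_p$. Since $\Downarrow_p$ is defined inductively as a least fixpoint capturing terminating successful computations, and reachability acceptance asks every trajectory to reach $F$ along a finite path, the two notions align: an accepting run is a finite, well-founded alternation tree whose branching mirrors the $\wedge/\vee$ structure produced by the translation, and the collapse of \linl{all}/\linl{any} into finite meets and joins matches the finite products in the semantics.

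The main obstacle I anticipate is the two-way aspect of the correspondence. Because recursive calls may move \emph{up} to the parent ($m = -1$), the predicate $\Downarrow_p$ is not a structural recursion confined to the subtree rooted at $n$, so the inductive invariant cannot be on subtree size; instead I would induct on the well-founded derivation of $\Downarrow_p$, matching each derivation step with a satisfying assignment of the transition formula and conversely, while carefully tracking how the automaton's current node shifts under up- and down-moves so that the run visits exactly the configurations of the program's computation. Once this bijection between derivations and accepting sub-runs is established, the equivalence at the root with initial aspect $\sigma$ follows immediately, and computability of the construction is clear because $\Asp(M)$ is finite and all signature functions are computable.
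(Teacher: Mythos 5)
Your proposal matches the paper's own proof essentially step for step: the paper likewise takes $Q = \Asp(M) \sqcup \{q_\top,q_\bot\}$ with initial state $\sigma$ and reachability acceptance $F=\{q_\top\}$, compiles each clause's right-hand sides into positive Boolean transition formulas by selecting the matching clause and case, evaluating the (ground) signature functions and \linl{if} guards at construction time, expanding \linl{all}/\linl{any} into finite conjunctions/disjunctions over the computed sets, and mapping recursive calls to variables $(\sigma',\dir)$, with correctness following from the correspondence between derivations of $\Downarrow_p$ and accepting runs. Your extra care about inducting on the well-founded derivation of $\Downarrow_p$ (rather than on subtree structure, which fails due to up-moves) is a detail the paper leaves implicit but is exactly the right way to discharge its claim that the correspondence ``follows by construction.''
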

\begin{proof}[Proof Sketch]
  The automaton states are $Q = \Asp(M) \sqcup\{q_\top,q_\bot\}$, with
  $q_\top$ and $q_\bot$ being absorbing states for accepting and
  rejecting upon termination of the program. The initial state is
  $q_i = \sigma$, and the transitions $\delta$ are obtained by a
  straightforward translation of expressions into Boolean formulas,
  detailed in 
  \betterappref{sec:translate}. The acceptance condition is reachability with
  $F = \{q_\top\}$. The correspondence between the language of the
  automaton and semantic proofs for $P$ is straightforward and
  essentially follows by construction.
\end{proof}

\begin{theorem}[Decidable Learning Meta-Theorem]
  Let $(\Ll,\Mm,\models)$ be a language. If there is a semantic
  evaluator, i.e., a well-formed program $P\in \lang$, such that for
  all $M\in\Mm$, $\Asp(M)$ is finite, and for all $e\in \Ll$ and
  $d\in D$ there is a $\sigma_d\in\Asp(M)$ for which
  $(M, \route(e), \clauses(P), \sigma_d)\Downarrow_p$ if and only if
  $M\models e = d$, then $(\Ll,\Mm,\models)$ has decidable learning.
  \label{lemma:mainlemma}
\end{theorem}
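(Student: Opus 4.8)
The plan is to show that the hypotheses of the theorem force the language $(\Ll,\Mm,\models)$ to be \emph{finite-aspect checkable} (FAC) in the sense of the definition above, and then to invoke the generic decision procedure of \Cref{sec:decid-learn-via}, which already establishes decidable learning for every FAC language. Almost all of the substance is packaged in \Cref{lemma:mainlemma-lemma}; given that lemma, the reduction to FAC is nearly immediate, and the remainder is assembling standard closure and emptiness results for tree automata. So the theorem is essentially a corollary of the program-to-automaton translation.

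First I would establish FAC. Fix an example $(M,d)\in\Mm\times D$. The program $P$ is well-formed with computable signature functions, and $\Asp(M)$ is finite by hypothesis, so \Cref{lemma:mainlemma-lemma} applies and yields, for the distinguished start state $\sigma_d\in\Asp(M)$, a two-way alternating tree automaton $\aut(P,M,\sigma_d)$ with
\[
  L(\aut(P,M,\sigma_d)) \;=\; \{\, e\in\Ll \,:\, (M,\route(e),\clauses(P),\sigma_d)\Downarrow_p \,\}.
\]
Combining this with the evaluator hypothesis---that $(M,\route(e),\clauses(P),\sigma_d)\Downarrow_p$ holds exactly when $M\models e = d$---gives $L(\aut(P,M,\sigma_d)) = \{e\in\Ll : M\models e = d\}$. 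Setting $\aut(M,d)\coloneqq\aut(P,M,\sigma_d)$ and using that the translation of \Cref{lemma:mainlemma-lemma} is effective (and that $\sigma_d$ is computable from $(M,d)$, being part of the evaluator's specification), the map $(M,d)\mapsto\aut(M,d)$ is computable. This is exactly the FAC condition.

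With FAC in hand, the learning algorithm is the generic one. Given a finite sample $E=(M_i,b_i)_i$ of labeled structures with $b_i\in\Bool$ and a regular tree grammar $\grammar$, I would compute $\aut(M_i,b_i)$ for each $i$, compute a grammar automaton $\aut(\grammar)$ with $L(\aut(\grammar))=L(\grammar)$, and form the product $\aut(E)\wedge\aut(\grammar)=\bigwedge_i \aut(M_i,b_i)\wedge\aut(\grammar)$, which is again (effectively) a tree automaton by closure under intersection. By the FAC characterization its language is precisely the set of $e\in L(\grammar)$ consistent with every example, so it is nonempty iff a solution exists. Converting the two-way alternating product into a one-way nondeterministic tree automaton and invoking the decidability of emptiness and witness synthesis (\Cref{sec:tree-automata}), I run an emptiness check: if the product is empty, report \emph{unrealizable}; otherwise extract a small witness tree, which is a consistent expression. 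Termination in either case yields decidable learning.

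The heavy lifting is all in \Cref{lemma:mainlemma-lemma}, so I expect no deep obstacle here, only two points of care. First, the start state $\sigma_d$ must depend only on $(M,d)$ and not on $e$---this is what lets a \emph{single} automaton capture the whole set $\{e : M\models e = d\}$, and it is guaranteed by reading the quantifiers in the hypothesis with $\sigma_d$ chosen before $e$ (for modal logic, $\sigma_{\tru}=s$ and $\sigma_{\fals}=\no(s)$). Second, the product automaton's language must be exactly the consistent expressions; this follows from closure of tree automata under product together with the per-example correctness from step one, and crucially requires \emph{no} complementation, since for a negative example we directly build $\aut(M_i,\fals)$ using the state $\sigma_{\fals}$ rather than negating a positive automaton. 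The only mildly delicate bookkeeping is confirming that the per-example automata and the grammar automaton live in a common model in which product, emptiness, and synthesis are available, which is handled by passing through one-way nondeterministic tree automata as in \Cref{sec:tree-automata}.
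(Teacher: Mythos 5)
Your proposal is correct and follows essentially the same route as the paper: both invoke \Cref{lemma:mainlemma-lemma} to build the per-example automata $\aut(P,M_i,\sigma_{d_i})$, take their product together with the grammar automaton, and decide emptiness (synthesizing a witness when nonempty). Your explicit detour through the FAC definition merely makes overt the generic algorithm of \Cref{sec:decid-learn-via} that the paper's proof instantiates directly, so there is no substantive difference.
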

\begin{proof}
  Let $P$ be an evaluator for a language $(\Ll,\Mm,\models)$, let
  $(M_i,d_i)_i$ be a finite set of examples from $\Mm\times D$, and
  let $\grammar$ be a tree grammar for $\Ll$. Build the tree automata
  $\aut(P,M_i,\sigma_{d_i})$ from
  \Cref{lemma:mainlemma-lemma}. Construct the product of these
  automata and convert the result to a nondeterministic tree automaton
  $\aut$. Take the product of $\aut$ with a nondeterministic tree
  automaton for $\grammar$, and use an emptiness algorithm to
  synthesize an expression or decide there is none.
\end{proof}

\paragraph{Remark on Complexity.} The complexity of decision
procedures for learning can be read from the number of aspects used by
the evaluation program, since they correspond to automaton
states. Products of two-way tree automata obtained from
\Cref{lemma:mainlemma-lemma} are converted to one-way nondeterministic
tree automata with an exponential increase in states using known
algorithms~\cite{two-way-vardi,cachat-two-way}. This leads to decision
procedures with time complexity exponential in the number of aspects
as well as the number of examples. Provided that signature functions
are computable in time exponential in the size of structures (true for
all languages in this paper), we can use the following:

\begin{corollary}
  Decision procedures for learning obtained via \Cref{lemma:mainlemma}
  have time complexity exponential in the number of examples and the
  number of aspects, and linear in the size of the grammar.
  \label{corollary:complexity-lemma}
\end{corollary}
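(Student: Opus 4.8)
The plan is to follow the construction in the proof of \Cref{lemma:mainlemma} stage by stage, tracking the number of automaton states at each step and invoking the standard complexity bounds for tree-automata operations recalled in \Cref{sec:tree-automata}. Fix a finite set of $\ell$ examples $(M_i, d_i)_i$, let $a = \max_i |\Asp(M_i)|$ be the largest number of aspects used on any single example, and let $|\grammar|$ be the size of the input grammar. The goal is to show the running time is $2^{O(\ell \cdot a)} \cdot O(|\grammar|)$, which is exponential in $\ell$ and $a$ and linear in $|\grammar|$.

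First I would bound the per-example automata. By \Cref{lemma:mainlemma-lemma}, each $\aut(P, M_i, \sigma_{d_i})$ is a two-way alternating tree automaton whose state set is $\Asp(M_i) \sqcup \{q_\top, q_\bot\}$, and hence has $O(a)$ states. Computing each automaton requires evaluating the signature functions, which by the hypothesis preceding the corollary is doable in time exponential in $|M_i|$; this cost is subsumed by the later bounds, since the aspects already grow with $M_i$.

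The crucial step is to intersect these automata \emph{while still at the alternating level}. Because alternation already supplies conjunctive branching, the product $\bigwedge_i \aut(P, M_i, \sigma_{d_i})$ is realized by a fresh initial state that dispatches one copy of each component, in its own initial state, at the root; the resulting two-way alternating automaton therefore has only $O(\ell \cdot a)$ states. Applying the two-way-to-one-way conversion from \Cref{sec:tree-automata} then yields a one-way nondeterministic tree automaton $\aut$ with $2^{O(\ell \cdot a)}$ states. To handle the syntactic restriction, I would take the product of $\aut$ with a nondeterministic tree automaton for $\grammar$, whose size is $O(|\grammar|)$ (one state per nonterminal, transitions per production); since one factor of this product is already fixed, the product has size $2^{O(\ell \cdot a)} \cdot O(|\grammar|)$. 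Emptiness of a one-way nondeterministic tree automaton, together with extraction of a minimal accepted tree, runs in time linear in the size of the automaton, so composing all the bounds gives the claimed total.

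The main obstacle is the bookkeeping needed to make each parameter appear with the correct dependence. The argument hinges on two points: (i) performing the $\ell$-fold intersection while the automata are still alternating, so that the single exponential incurred by determinization absorbs both $\ell$ and $a$ jointly rather than compounding into an $\ell$-fold iterated exponential; and (ii) composing the grammar automaton only \emph{after} determinization and relying on the linearity of tree-automaton emptiness in the size of its input, so that $|\grammar|$ enters only through one product factor and therefore contributes only linearly. A minor point to check is that the one-time cost of building the transition tables via the signature functions does not dominate, which follows from the assumed exponential-in-structure-size computability of those functions.
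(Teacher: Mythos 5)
Your proposal is correct and follows essentially the same route as the paper: build the per-example two-way alternating automata from \Cref{lemma:mainlemma-lemma}, intersect them while still alternating, convert once to a one-way nondeterministic automaton (the single exponential in examples times aspects), then take the product with the grammar automaton and run linear-time emptiness. Your explicit observation that the intersection must happen at the alternating level---so that the exponential from the two-way-to-one-way conversion absorbs $\ell$ and $a$ jointly rather than compounding---is exactly what the paper's construction relies on, just stated more carefully than in the paper's brief remark.
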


$\lang$ enables compact descriptions of two-way tree automata, and
thereby enables decision procedures for learning to be developed using
intuition from programming. If we can write a $\lang$ program to
interpret a language $\Ll$ using a fixed amount of auxiliary state for
any given structure, then the language is FAC and decision procedures
for learning and synthesis follow from results in automata theory.

\subsection{Decidable Learning for Modal Logic and Dual Clauses}
\label{sec:modal-expr-separ}
We finish this section with a decidable learning theorem for modal
logic by completing the $\modal$ program from
\Cref{sec:motiv-exampl-modal} and explaining \emph{dual} states and
programs, which are syntactic sugar useful for handling negation and
negative examples.

The grammar for modal logic formulas over propositions $\Sigma$ from
earlier has a straightforward ranked alphabet $\Delta$, with members
of $\Sigma$ having arity $0$. The class $\Mm$ consists of finite
pointed Kripke structures $G=(W,s,E,P)$. For a given
$G=(W,s,E,P)$ we have
\begin{align*}
  \Asp(G)=\{w,\no(w) \,:\, w\in W\},
\end{align*}
where $\no$ is a constructor for states related to negation. There are
two signature functions: a state function for computing neighborhoods
$\{y\in G \,:\, E(w,y)\}$ and a Boolean function for computing
membership in $P(w)$, for a given $w\in W$.  Along with the clause in
\Cref{fig:modal-clause1}, $\modal$ includes the \emph{dual} of that
clause, shown in \Cref{fig:modal-clause2}, which operates on states of
the form $\no(x)$. For any clause $c$ there is a simple translation to
produce its dual clause $\dual(c)$ as follows:

\vspace{0.1in}
\noindent \begin{minipage}[c]{\linewidth}
  \begin{tabularx}{\linewidth}{r l X l}
    $c\,\,\, =$ &$\text{\lstinline!$P$($M$,\ $\sigma(z)$,\ $n$)!}$ &
    $\text{\lstinline!= match\ $n.\lab$\ with\ $\alpha_1$ ->\ $e_1$!}$ & $\ \ ...\ \ \text{\lstinline!$\alpha_n$ ->\ $e_n$!}$ \\
    $\dual(c)\, =$ &$\text{\lstinline!$P$($M$,\ $\no(\sigma(z))$,\ $n$)!}$ & $\text{\lstinline!= match\ $n.\lab$\ with\ $\alpha_1$ ->\ $\dual(e_1)$!}$ & $\ \ ...\ \ \text{\lstinline!$\alpha_n$ ->\ $\dual(e_n)$!}$
  \end{tabularx}
\end{minipage}
\vspace{0.1in}

The expressions $\dual(e_i)$ are obtained by recursively swapping
\linl{True} with \linl{False}, \linl{and} with \linl{or}, \linl{all}
with \linl{any}, etc., as follows: \vspace{0.1in}

\hspace{-0.1in}
\begin{minipage}[l]{\linewidth}
  \begin{tabularx}{\linewidth}{l l}
    $\no($\linl{True}$) =\ $ \linl{False} & $\no($\linl{False}$) =\ $\linl{True} \\
    $\no(e$ \linl{and} $e') = \no(e)$ \linl{or} $\no(e')$ & $\no(e$ \linl{or}  $e') = \no(e)$ \linl{and} $\no(e')$ \\
    $\no($\linl{$P$($M$,\ $\sigma(z)$,\ $n.\dir$)}$) =\ $ \linl{$P$($M$,\ $\flip(\sigma(z))$,\ $n.\dir$)} & $\no(f(v)) = \neg f(v)$ \\
  \end{tabularx}
  \begin{tabularx}{\linewidth}{l}
    $\no($\linl{all (LAM$z$.\ $e$)\ $g(v)$}$) =\ $\linl{any (LAM$z$.\ $\no(e)$)\ $g(v)$} \\
    $\no($\linl{any (LAM$z$.\ $e$)\ $g(v)$}$) =\ $\linl{all (LAM$z$.\ $\no(e)$)\ $g(v)$} \\
    $\no($\linl{if\ $f(v)$\ then\ $e_1$ else\ $e_2$}$)$ = \linl{if\ $f(v)$\ then\ $\no(e_1)$ else\ $\no(e_2)$}
  \end{tabularx}
\end{minipage}

\vspace{0.1in}
\noindent where $\flip(\no(\sigma(z))) = \sigma(z)$ and otherwise
$\flip(\sigma(z)) = \no(\sigma(z))$. For the program \linl{Modal}, the
dual and non-dual clauses invoke each other whenever a negation
operator is encountered in the syntax tree. But dual clauses are
useful even if the language $\Ll$ has no negation operation, given we
may need to check that semantic relationships \emph{do not} hold for
negative structures. Also note that the dual transformation above is a
syntactic notion for $\lang$ programs, entirely independent of the
symbolic language $\Ll$. 
From now on we omit the dual clauses from our presentation.

\begin{figure}
  \centering
  \begin{minipage}[c]{0.9\linewidth}
\begin{lstlisting}
  Modal($G$, $\no(w)$, $n$) = match $n.\lab$ with
      $\wedge$  -> Modal($G$, $\no(w)$, $n.\child_1$) or Modal($G$, $\no(w)$, $n.\child_2$)
      $\vee$  -> Modal($G$, $\no(w)$, $n.\child_1$) and Modal($G$, $\no(w)$, $n.\child_2$)
      $\neg$  -> Modal($G$, $w$, $n.\child_1$)
      $\MBoxx$  -> any (LAM$z$. Modal($G$, $\no(z)$, $n.\child_1$)) $\{y\in G \,:\, E(w,y)\}$
      $\MDiamm$  -> all (LAM$z$. Modal($G$, $\no(z)$, $n.\child_1$)) $\{y\in G \,:\, E(w,y)\}$
      $x$  -> $x\notin P(w)$
\end{lstlisting}
  \end{minipage}
\caption{Dual clause for $\modal$, which evaluates formula $\varphi$
  pointed to by $n$ against $G$ and verifies $G\not\models\varphi$.}
  \label{fig:modal-clause2}
\end{figure}

In light of the meta-theorem and the $\modal$ program, we have the
following.
\begin{theorem}
  Modal logic separation for sets of Kripke structures $\Pos$ and
  $\Neg$ with grammar $\grammar$ is decidable in time
  $\Oo(2^{\poly(mn)}\cdot|\grammar|)$, where
  $n = \max_{G\in \Pos\cup \Neg} |G|$ and $m = |\Pos| + |\Neg|$.
\end{theorem}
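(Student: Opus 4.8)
The plan is to instantiate the Decidable Learning Meta-Theorem (\Cref{lemma:mainlemma}) with the program $P$ given by $\modal$ (\Cref{fig:modal-clause1}) together with its dual clause (\Cref{fig:modal-clause2}), and then read off the time bound from the automaton construction underlying \Cref{lemma:mainlemma-lemma}. First I would discharge the hypotheses of the meta-theorem. Well-formedness is immediate: the state pattern $w$ of the primal clause matches exactly the states $w\in W$, and the dual pattern $\no(w)$ matches exactly the states $\no(w)$, so each $\sigma\in\Asp(G)=\{w,\no(w):w\in W\}$ is matched by precisely one clause. Finiteness of aspects is also immediate, since $|\Asp(G)|=2|W|$ and every $G\in\Pos\cup\Neg$ is finite, and the two signature functions (the neighborhood $\{y:E(w,y)\}$ and the membership test $x\in P(w)$) are evidently computable.

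The technical heart is the semantic correctness required by the meta-theorem. I would prove, by a single simultaneous structural induction on $\varphi$, the two statements
\begin{gather*}
  (G,\route(\varphi),\clauses(P),w)\Downarrow_p \iff G,w\models\varphi, \\
  (G,\route(\varphi),\clauses(P),\no(w))\Downarrow_p \iff G,w\not\models\varphi,
\end{gather*}
for every $w\in W$. The base case $\varphi=x\in\Sigma$ follows from the final case of each clause, which returns $x\in P(w)$ (resp.\ $x\notin P(w)$). The Boolean connectives follow because the primal clause returns the \linl{and}/\linl{or} matching the semantics while the dual clause returns the De Morgan dual; the $\neg$ case is precisely where the primal and dual clauses call one another, so the inductive hypothesis for the dual statement discharges the primal one and vice versa. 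The modal cases $\MBoxx,\MDiamm$ use the state function to enumerate the $E$-neighbors of $w$ and fold the recursive results with \linl{all}/\linl{any}, with the dual clause again supplying the dual quantifier. Setting $\sigma_d=s$ for $d=\tru$ and $\sigma_d=\no(s)$ for $d=\fals$, where $s$ is the distinguished start node, yields $(G,\route(\varphi),\clauses(P),\sigma_d)\Downarrow_p$ iff $G\models\varphi=d$, which is exactly the evaluator condition of \Cref{lemma:mainlemma}, and decidable learning follows.

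For the time bound I would follow the construction behind \Cref{lemma:mainlemma-lemma} and \Cref{corollary:complexity-lemma} while tracking state counts. For each structure $G_i$, the automaton $\aut(P,G_i,\sigma_{d_i})$ is a two-way alternating tree automaton with state set $\Asp(G_i)\sqcup\{q_\top,q_\bot\}$, hence $O(n)$ states. Because two-way alternating automata admit intersection with only additive blowup (a fresh initial state alternately spawns one copy of each component at the root), the conjunction over the $m=|\Pos|+|\Neg|$ examples is a single two-way alternating automaton with $O(mn)$ states. Converting it to an equivalent one-way nondeterministic tree automaton incurs the exponential blowup of~\cite{two-way-vardi,cachat-two-way}, giving $2^{\poly(mn)}$ states; the product with a nondeterministic automaton for $\grammar$ multiplies the state count by $O(|\grammar|)$, and emptiness is linear in the resulting size. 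The dominant $2^{\poly(mn)}$ factor together with the linear dependence on $|\grammar|$ yields the claimed $\Oo(2^{\poly(mn)}\cdot|\grammar|)$ bound.

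The step I expect to be the main obstacle is the simultaneous induction establishing the primal/dual correspondence, in particular verifying that the $\dual$ transformation of \Cref{sec:modal-expr-separ} is semantically sound at the $\neg$ case, where the two clauses hand control to each other. Getting the $\flip$ bookkeeping exactly right—so that an even number of nested negations returns control to the primal clause—is the one place where a careless argument could go astray. By contrast, the complexity accounting is routine once one recalls that intersection is cheap for alternating automata and that the sole unavoidable exponential arises from the two-way-to-one-way conversion.
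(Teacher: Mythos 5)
Your proposal is correct and follows essentially the same route as the paper: a simultaneous induction on $\varphi$ establishing that $(G,\route(\varphi),\clauses(\modal),w)\Downarrow_p$ iff $G,w\models\varphi$ and $(G,\route(\varphi),\clauses(\modal),\no(w))\Downarrow_p$ iff $G,w\not\models\varphi$, followed by an application of the meta-theorem with $\sigma_\tru=s$, $\sigma_\fals=\no(s)$ and the complexity corollary using $|\Asp(G)|=\Oo(|W|)$. The only difference is that you unfold the automata-theoretic complexity accounting (additive intersection of alternating automata, then the single exponential two-way-to-one-way conversion) that the paper delegates wholesale to \Cref{corollary:complexity-lemma}, and your accounting is consistent with the claimed bound.
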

\begin{proof}[Proof Sketch.]
  For all Kripke structures $G=(W,s,E,P)$, $w\in W$, and formulas
  $\varphi$, we have that $G,w\models\varphi$ iff
  $(G, \route(\varphi), \clauses(\modal), w) \Downarrow_p$ and
  $G,w\not\models\varphi$ iff
  $(G, \route(\varphi), \clauses(\modal), \no(w)) \Downarrow_p$. The
  proof is by induction on $\varphi$. The rest follows by
  \Cref{lemma:mainlemma} and \Cref{corollary:complexity-lemma}, with
  $D=\{\tru,\fals\}$, $\sigma_\tru = s$ and $\sigma_\fals=\no(s)$,
  noting that $|\Asp(G)|=\Oo(|W|)$.
\end{proof}


\emph{Computation tree logic.} Do other modal logics over finite
Kripke structures have decidable learning? For computation tree logic
(CTL) the answer is affirmative, and we can program a $\lang$
evaluator whose aspects again involve the nodes of the Kripke
structure. We consider the following grammar for CTL formulas, from
which other standard operators can be defined:
\begin{align*}
  \varphi\Coloneqq a\in\Sigma \grammarsep \varphi\vee\varphi'\grammarsep
  \neg\varphi\grammarsep \EG\varphi\grammarsep \Ex(\varphi\Until\varphi')\grammarsep \EX\varphi
\end{align*}
These formulas are interpreted over finite Kripke structures, and the
operations in common with propositional modal logic are interpreted in
the same way. The novelty of CTL is that it can quantify over
\emph{paths} in the Kripke structure using the formulas starting with
$\Ex$, which assert the existence of a path along which the subformula
holds. The semantics for path quantifiers is given recursively based
on the following:
\begin{align*}
  G,w\models \EX\varphi \,\,\Leftrightarrow\,\, \exists w'.\, E(w,w')\text{ and }
  G,w'\models\varphi
\end{align*}
with the other two path quantifiers interpreted according to the
equivalences
\begin{align*}
  (1)\quad \EG\varphi \equiv \varphi \wedge \EX(\EG\varphi)
  \quad\text{and}\quad (2)\quad
  \Ex(\varphi\Until\varphi') \equiv \varphi' \vee \varphi \wedge \EX (\Ex(\varphi\Until\varphi')),
\end{align*}
with $(1)$ understood as a greatest fixpoint and $(2)$ as a least
fixpoint. This recursion introduces a subtlety, because the evaluator
should avoid infinite recursion caused by interpreting $\Ex$ formulas
as the right-hand sides of these equivalences. We can address this by
introducing a \emph{bounded counter} in the state of our CTL program,
which enables it to terminate after a sufficient amount of recursion
(bounded by the number of nodes in the Kripke structure). Below we
state decidable learning for CTL; see \betterappref{sec:ctl}~ for the program and
proof.

\begin{theorem}
  CTL separation for finite sets $\Pos$ and $\Neg$ of finite pointed
  Kripke structures, and grammar $\grammar$, is decidable in time
  $\Oo(2^{\poly(mn^2)}\cdot|\grammar|)$, where $m=|\Pos|+|\Neg|$ and
  $n=\max_{G\in \Pos\cup\Neg}|G|$.
\end{theorem}

\smallskip In the remainder of the paper we derive new decision
procedures for several learning problems by writing $\lang$
programs. We avoid details about the $\lang$ language and focus
instead on the logic of the programs as well as the aspects needed to
accurately evaluate expressions.

\section{Learning Regular Expressions}
\label{sec:regular-expressions}

In this section we develop a decision procedure for learning regular
expressions from finite words. In contrast to propositional modal
logic, the semantics of regular expressions involves recursion in the
structure of expressions as well as recursion over the structures
(finite words) themselves.

\subsection{Separating Words with Regular Expressions}
\label{sec:separ-words-with}
Consider the following problem.

\begin{problem}[Regular Expression Separation]
  Given finite sets $\Pos$ and $\Neg$ of finite words over an alphabet
  $\Sigma$, and a grammar $\grammar$, synthesize a regular expression
  over $\Sigma$ that matches all words in $\Pos$, does not match any
  word in $N$, and conforms to $\grammar$, or declare none exist.
\end{problem}

We consider extended regular expressions from the following grammar.
\begin{align*}
  e \Coloneqq a\in\Sigma \grammarsep e\cdot e'
  \grammarsep e + e' \grammarsep e
  \cap e' \grammarsep e^* \grammarsep \neg e
\end{align*}
Recall that, to use the meta-theorem (\Cref{lemma:mainlemma}), we must
program an evaluator for regular expressions over fixed
$\Sigma$-words. The notion of evaluation here is \emph{membership} of
a word $w$ in the language of a regular expression $e$, i.e.
$w\models e \Leftrightarrow w\in L(e)$. This semantics has a
straightforward recursive definition, and it can be presented in terms
of an auxiliary relation for membership of \emph{subwords} of $w$:
\begin{align*}
  w\in L(e) \quad \Leftrightarrow \quad w,(1,|w|+1)\models e.
\end{align*}
In the notation above, $(l,r)$ indicates the subword $w(l,r)$ from
positions $l$ to $r-1$ inclusive, taking $w(i,i)=\epsilon$ for any $w$
and $i$. The semantics of subword membership is given below.

\vspace{0.1in}
\begin{minipage}[t]{\textwidth}
\centering
\begin{tabularx}{\textwidth}{l l l l l l l}
  $w, (l,r)$ & $\models$ & $a \in \Sigma$ & if & $w(l) = a$ & and & $r
  = l+1$ \\
  $w, (l,r)$ & $\models$ & $e \cdot e'$ & if & $w, (l,k)\models e$ &
  and &
  $w, (k,r)\models e'$ \ for some $k\in [l,r]$\\
  $w, (l,r)$ & $\models$ & $e + e'$ & if & $w, (l,r)\models e$ & or
  & $w, (l,r)\models e'$ \\
  $w, (l,r)$ & $\models$ & $e\cap e'$ & if & $w, (l,r)\models e$
  & and & $w, (l,r)\models e'$ \\

  $w, (l,r)$ & $\models$ & $e^*$ & if & $l=r$ & or & $\exists
  k\in[l+1,r].\,\, w, (l,k)\models
  e$ \ and \ $w, (k,r)\models e^*$ \\

  $w, (l,r)$ & $\models$ & $\neg e$ & if & $w, (l,r)\not\models
  e$ && \\
\end{tabularx}
\end{minipage}
\vspace{0.1in}

\noindent Observe that if we \emph{fix} the word $w$ then the number
of pairs $(l,r)$ used in the definition above is finite. Also observe
that the definition in the case for Kleene star is well-founded
because either the expression size decreases or the subword length
decreases.

\paragraph{Remark} Regular expression separation has two
overfitting-style solutions if we ignore the syntax restriction from
the input grammar $\grammar$. Simply use $+_{w\in\Pos}w$ for the
tightest regular expression that matches all of $\Pos$, or
alternatively, $\cap_{w\in\Neg}\neg w$ for the loosest one that avoids
matching any of $\Neg$. If there is any separating regular expression
at all, then either of these must
work.

\subsection{Decidable Learning for Regular Expressions}
\label{sec:progr-eval-regul}

We write a $\lang$ program called $\reg$ which reads a regular
expression syntax tree and verifies whether a given word is a member
of the language for the regular expression. The class $\Mm$ consists
of structures encoding $\Sigma$-words, $\Ll$ consists of regular
expressions over $\Sigma$, and semantics is membership in the language
of regular expressions. States are pairs of ordered indices,
representing subwords, along with duals to handle negation. 
\begin{align*}
  \Asp(w) \coloneqq \left\{(l,r),\no(l,r) \,:\, l\le r \in [1, |w|+1]\right\}.
\end{align*}
\noindent For instance, if $w = \mathit{abbb}$, then the subword
$w'=\mathit{ab}$ is represented as the pair of positions $(1,3)$. The
alphabet $\Delta$ for syntax trees is straightforward and uses symbols
of arity $0$ for members of $\Sigma$.  We use state functions for
looking up the letter at a given position $i$, written $w(i)$, the
successor function on positions $x$, written $x+1$, and functions
$[x,y]$ and $[x+1,y]$ for computing the indices between two positions
$x\le y$, with $[x+1,y] = \emptyset$ if $x=y$. Boolean functions
include equality and disequality on positions and letters of $\Sigma$.

The program $\reg$ (with dual omitted) is given in
\Cref{fig:reg-clause1}. States matching $(l,r)$ are used by the
program to check whether $w(l,r)\in L(e)$, and states matching
$\no(l,r)$ are used to check whether $w(l,r)\notin L(e)$.  Using
$\reg$ we get the following.

\begin{figure}
  \centering
\begin{lstlisting}
  Reg($w$, $(l,r)$, $n$) = match $n.\lab$ with
     $*$  -> if ($l = r$) then True else
              any (LAM$x$. Reg($w$, $(l,x)$, $n.\child_1$) and Reg($w$, $(x,r)$, $n.\stay$)) $[l+1,r]$
     $\,\cdot$  -> any (LAM$x$. Reg($w$, $(l,x)$, $n.\child_1$) and Reg($w$, $(x,r)$, $n.\child_2$)) $[l,r]$
     $+$  -> Reg($w$, $(l,r)$, $n.\child_1$) or Reg($w$, $(l,r)$, $n.\child_2$)
     $\neg$  -> Reg($w$, $\no(l,r)$, $n.\child_1$)
     $\cap$  -> Reg($w$, $(l,r)$, $n.\child_1$) and Reg($w$, $(l,r)$, $n.\child_2$)
     $x$  -> $r = l+1$ and $w(l) = x$
\end{lstlisting}
\caption{$\reg$ evaluates the regular expression $e$ pointed to by $n$
  against an input word $w$ and verifies $w\in L(e)$.}
  \label{fig:reg-clause1}
\end{figure}

\begin{theorem}
  Regular expression separation for sets of words $\Pos$ and $\Neg$
  and grammar $\grammar$ is decidable in time
  $\Oo(2^{\poly(mn^2)}\cdot |\grammar|)$, where
  $n = \max_{w\in \Pos\cup \Neg} |w|$ and $m = |\Pos| + |\Neg|$.
\end{theorem}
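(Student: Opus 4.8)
The plan is to follow the template of the modal logic separation proof: first establish correctness of the $\reg$ program (together with its dual clauses) against the subword membership semantics, and then invoke \Cref{lemma:mainlemma} and \Cref{corollary:complexity-lemma} after counting aspects.

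The heart of the argument is the correctness claim: for every word $w$, every index pair $(l,r)$ with $1\le l\le r\le |w|+1$, and every regular expression $e$,
\[
  w,(l,r)\models e \iff (w, \route(e), \clauses(\reg), (l,r)) \Downarrow_p,
\]
and dually,
\[
  w,(l,r)\not\models e \iff (w, \route(e), \clauses(\reg), \no(l,r)) \Downarrow_p.
\]
I would prove the two statements simultaneously, since the $\neg$ clause switches between ordinary and dual states, and by construction each dual clause computes the De Morgan dual of the original (swapping \linl{and}/\linl{or} and \linl{any}/\linl{all} and negating the Boolean base cases). Each non-star constructor ($a$, $\cdot$, $+$, $\cap$, $\neg$) then matches one line of the subword semantics table against the corresponding clause of \Cref{fig:reg-clause1} directly.

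The subtlety flagged after the semantics is that induction on expression size alone does not suffice: the $e^*$ clause makes a recursive call at $n.\stay$, revisiting the same $*$ node, so the expression does not shrink there. I would therefore carry out the induction over the lexicographic order on the pair (structural size of $e$, subword length $r-l$). For all non-star cases the recursive calls descend to strict subexpressions, so the first component strictly decreases and the inductive hypothesis applies immediately. For $e^*$, when $l=r$ the clause returns \linl{True}, matching the first disjunct of the star semantics; otherwise the \linl{any} ranges over $x\in[l+1,r]$, where the recursive call at $n.\child_1$ on subword $(l,x)$ descends to the strictly smaller body expression, while the recursive call at $n.\stay$ on subword $(x,r)$ keeps the same expression but strictly shrinks the subword (since $x\ge l+1$ forces $r-x<r-l$), decreasing the second lexicographic component. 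Thus the inductive hypothesis applies to both recursive calls, reproducing exactly the existential disjunct of the star semantics.

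The rest is routine. Since $w\in L(e)\iff w,(1,|w|+1)\models e$, I set $D=\{\tru,\fals\}$ with $\sigma_\tru=(1,|w|+1)$ and $\sigma_\fals=\no(1,|w|+1)$, so the correctness claim is precisely the hypothesis required by \Cref{lemma:mainlemma}. Counting aspects, $\Asp(w)$ consists of the pairs $(l,r)$ with $l\le r\in[1,|w|+1]$ together with their duals, giving $|\Asp(w)|=\Oo(n^2)$; plugging this into \Cref{corollary:complexity-lemma} (exponential in the number of examples and the number of aspects, linear in the grammar) yields the claimed $\Oo(2^{\poly(mn^2)}\cdot|\grammar|)$. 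I expect the main obstacle to be the star case: setting up the lexicographic measure so that the $n.\stay$ recursion is provably well-founded, and verifying that the equivalence with the fixpoint-flavored star semantics goes through in both the positive and dual directions.
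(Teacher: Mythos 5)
Your proposal is correct and takes essentially the same approach as the paper: the same simultaneous correctness claim for $\reg$ and its dual clauses relating the subword semantics to $\Downarrow_p$, the same aspect count $|\Asp(w)|=\Oo(n^2)$, and the same appeal to \Cref{lemma:mainlemma} and \Cref{corollary:complexity-lemma} with $\sigma_\tru=(1,|w|+1)$ and $\sigma_\fals=\no(1,|w|+1)$. The only (immaterial) difference is the induction measure: the paper inducts on $|w|$ with inner induction on $e$, i.e., lexicographically on (subword length, expression size), whereas you use (expression size, subword length); both orders are well-founded here, since every recursive call of $\reg$ either moves to a strict subexpression or, in the Kleene-star $n.\stay$ call, keeps the expression fixed while strictly shrinking the subword.
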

\begin{proof}[Proof Sketch.]
  For all words $w$, positions $1\le i\le j\le |w|+1$, and regular
  expressions $e$, we have that $w(i,j)\in L(e)$ if and only if
  $(w, \route(e), \clauses(\reg), (i,j)) \Downarrow_p$ and
  $w(i,j)\notin L(e)$ if and only if
  $(w, \route(e), \clauses(\reg), \no(i,j)) \Downarrow_p$. The proof
  is by induction on $|w|$ and inner induction on $e$. We have
  $|\Asp(w)| = \Oo(|w|^2)$, and the theorem follows by
  \Cref{lemma:mainlemma} and \Cref{corollary:complexity-lemma}.
\end{proof}

\section{Linear Temporal Logic}
\label{sec:ltl}

In this section we consider synthesizing linear temporal logic (LTL)
formulas that separate infinite, periodic words. We again derive a
decision procedure for learning by writing a program.

\subsection{Separating Infinite Words with Linear Temporal Logic}
\label{sec:separ-infin-words}

We consider separating infinite periodic words over a finite alphabet
$\Sigma$. Such words $w\in\Sigma^\omega$ can be represented finitely
as the concatenation of a finite prefix $u\in\Sigma^*$ with a finite
\emph{repeated} suffix $v\in\Sigma^*$.  
For example, the infinite word
  $\mathit{babaabbaabbaabbaabb}\cdots$
  can be represented with $u=\mathit{bab}$ and $v=\mathit{aabb}$. The
  word is determined by the pair $(u,v)$, though in general there may
  be multiple ways to pick $u$ and $v$. Here we could also have picked
  $u=\mathit{baba}$ and $v=\mathit{abba}$. We refer to infinite
  periodic words, represented by pairs $(u,v)$, as \emph{lassos}.

\begin{problem}[Linear Temporal Logic Separation]
  Given finite sets $P$ and $N$ of lassos over an alphabet $\Sigma$,
  and a grammar $\grammar$ for LTL over $\Sigma$, synthesize a formula
  $\varphi\in L(\grammar)$ such that $w\models\varphi$ for all
  $w\in P$ and $w\not\models\varphi$ for all $w\in N$, or declare no
  such formula exists.
\end{problem}

Our LTL formulas come from the following grammar.
\begin{align*}
  \varphi \Coloneqq a\in\Sigma \grammarsep \varphi \wedge
  \varphi' \grammarsep \varphi \vee \varphi' \grammarsep \neg\varphi \grammarsep \Next \varphi
  \grammarsep \varphi \Until \varphi'
\end{align*}

Below we present the semantics in a way that makes clear the aspects,
which are \emph{positions} in the lasso $(u,v)$, along with an
indication of whether the position corresponds to $u$ or $v$. An LTL
formula $\varphi$ is true in a lasso $(u,v)$, written
$(u,v)\models\varphi$, precisely when $(u,v),(1,\_)\models\varphi$,
with the latter defined below. The main point is that the following two
relationships hold
\begin{align*}
  (u,v),(i,\_)\models\varphi \,\Leftrightarrow\,
                                  u^iv^\omega\models \varphi \quad\text{and}\quad
  (u,v),(\_,j)\models\varphi \,\Leftrightarrow\,
                                   v^{j}v^\omega\models \varphi,
\end{align*}
where $uv^\omega\models\varphi$ is the standard semantics for
LTL~\cite{temporal-logic-pnueli}. To denote the
letter at position $i$ we write $w(i)$. We use $i$ to range over
$[1,|u|]$ and $j$ to range over $[1,|v|]$
. If $j' < j$, then $[j,j']$ means $[1,j']\cup [j,|v|]$. We use
$[a,b)$ to exclude $b$.

\vspace{0.1in}
\begin{minipage}[t]{\textwidth}
\hspace{-0.15in}
\begin{tabularx}{\textwidth}{l l l l l l l}
  $(u,v),(i,\_)$ & $\models$ & $a \in \Sigma$ & if & $u(i) = a$ &&\\
  $(u,v),(\_,j)$ & $\models$ & $a \in \Sigma$ & if & $v(j) = a$ &&\\
  $(u,v),p$ & $\models$ & $\neg\varphi$ & if & $(u,v),p\not\models\varphi$ &&\\
  $(u,v),p$ & $\models$ & $\varphi\wedge\varphi'$ & if &
  $(u,v),p\models\varphi$ \ and \ $(u,v),p\models\varphi'$ && \\
  $(u,v),p$ & $\models$ & $\varphi\vee\varphi'$ & if &
  $(u,v),p\models\varphi$ \ or \ $(u,v),p\models\varphi'$ &&\\

  $(u,v),(|u|,\_)$ & $\models$ & $\Next\,\varphi$ & if &
  $(u,v),(\_,1)\models\varphi$ & &\\
  $(u,v),(i,\_)$ & $\models$ & $\Next\,\varphi$ & if &
  $(u,v),(i+1,\_)\models\varphi\quad i<|u|$ &\\
  $(u,v),(\_,j)$ & $\models$ & $\Next\,\varphi$ & if &
  $(u,v),(\_,\,j\,\,\text{mod}\,\, |v| + 1)\models\varphi$ &&\\\\

  $(u,v),(i,\_)$ & $\models$ & $\varphi\Until\varphi'$ & if & $\exists
  i'\ge i.\,\,\, (u,v),(i',\_)\models\varphi' \,\,\,\text{and}\,\,\, \forall
  i''\in[i,i').\,\,\, (u,v),(i'',\_)\models\varphi$ \\
  &&& or & $\exists j.\,\, \forall i'\ge i.\,\,\, (u,v),(i',\_)\models
  \varphi\,\,\,\text{and}\,\,\,\forall j'<j.\,\,\, (u,v),(\_,j')\models\varphi$\\
  &&&& $\,\,\qquad\quad\text{and}\,\,(u,v),(\_,j)\models\varphi'$ \\\\

  $(u,v),(\_,j)$ & $\models$ & $\varphi\Until\varphi'$ & if & $\exists
  j'.\,\,\,\forall j''\in [j,j').\,\,\,
  (u,v),(\_,j'')\models\varphi\,\,\,\text{and}\,\,\, (u,v),(\_,j')\models\varphi'$

\end{tabularx}
\end{minipage}
\vspace{0.1in}

\subsection{Decidable Learning for LTL}
\label{sec:progr-eval-ltl}
The $\lang$ program $\ltl$ in \Cref{fig:ltl-clauses} reads LTL syntax
trees and evaluates them over $\Sigma$-lassos, presented as pairs of
finite words $(u,v)\in\Sigma^*\times\Sigma^*$. Again we omit $\no$
clauses. The ranked alphabet $\Delta$ for syntax trees is similar to
those of regular expressions and modal logic. Signature functions
include functions for word length, written $|w|$, and functions for
computing sets of consecutive positions, e.g., $[x,y]$ and
$[x,y)$. There is also a function $\wraparound(j)$ defined by
\begin{align*}
  \wraparound(j) = \text{if}\,\, j > |v| \text{  then  } 1 \text{
  else  } j
\end{align*}
which is used to reset the current lasso position to the beginning of
the suffix $v$ when it exceeds $|v|$. There are also functions for
comparison of positions, e.g. $i < i'$, and equality and disequality
for alphabet letters at a given position, e.g. $u(i) = x$. The aspects
have the form $(\_,\cdot)$ and $(\cdot,\_)$ to encode whether a
position is part of $u$ or $v$. For a given lasso $(u,v)$, we have:
\begin{align*}
  \Asp((u,v)) &\coloneqq \left\{p, \no(p) \,:\, p\in\pos\right\}\\
  \pos &\coloneqq \left\{(i,\_) \,:\, i\in [1,|u|]\}\cup\{(\_,j) \,:\, j\in[1,|v|]\right\}
\end{align*}
\begin{theorem}
  Linear temporal logic separation for finite sets $\Pos$ and $\Neg$
  of lassos, and grammar $\grammar$, is decidable in time
  $\Oo(2^{\poly(mn)}\cdot|\grammar|)$, with $m=|\Pos|+|\Neg|$ and
  $n=\max_{(u,v)\in \Pos\cup \Neg}(|uv|)$.
\end{theorem}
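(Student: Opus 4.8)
The plan is to follow the same two-step recipe used in the proofs for modal logic and regular expressions: first establish that the $\ltl$ program faithfully evaluates the LTL satisfaction relation at every lasso position, and then invoke the meta-theorem (\Cref{lemma:mainlemma}) together with the complexity bound (\Cref{corollary:complexity-lemma}). Concretely, I would prove the correctness lemma stating that for every lasso $(u,v)$, every position $p\in\pos$, and every LTL formula $\varphi$,
\begin{align*}
  (u,v),p\models\varphi \,\,\Leftrightarrow\,\, ((u,v),\route(\varphi),\clauses(\ltl),p)\Downarrow_p ,
\end{align*}
and dually that $(u,v),p\not\models\varphi$ holds exactly when the same configuration with the negated state $\no(p)$ terminates successfully. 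Taking $D=\{\tru,\fals\}$ with witness states $\sigma_\tru=(1,\_)$ and $\sigma_\fals=\no(1,\_)$ then instantiates the hypothesis of \Cref{lemma:mainlemma}, since $(u,v)\models\varphi$ is by definition $(u,v),(1,\_)\models\varphi$.

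I would prove the correctness lemma by induction on the structure of $\varphi$, carrying the positive and dual statements simultaneously; the dual direction follows from the syntactic duality of clauses, whose soundness is a clause-by-clause check against the swap of \linl{and}/\linl{or} and \linl{all}/\linl{any}. The atomic, conjunction, disjunction, and negation cases are immediate from the matching clauses. The first genuine work appears in the $\Next$ case, where I must verify that the three position-transition cases of the program---advancing within $u$, crossing from the last position of $u$ into position $1$ of $v$, and advancing cyclically within $v$ via $\wraparound$---exactly reproduce the three semantic rules for $\Next$. The key observation is that, because the word is periodic, the successor of any position is again a position in the finite set $\pos$, so the single recursive call on the subformula always lands in a valid state.

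The main obstacle is the $\Until$ operator. The program realizes the unbounded "eventually, with an invariant until then" by finite \linl{any}/\linl{all} iteration over the position sets of $u$ and $v$, and I must argue this bounded search captures the semantic quantifiers. For a $u$-position this means handling the two-way case split in the semantics: either $\varphi'$ is witnessed at some $i'\ge i$ inside $u$ with $\varphi$ holding on $[i,i')$, or $\varphi$ holds throughout the remainder of $u$ and the witness for $\varphi'$ occurs inside $v$ after $\varphi$ holds on a (possibly wrapping) prefix. For a $v$-position the subtlety is the cyclic interval: the range $[j,j']$ denotes $[1,j']\cup[j,|v|]$ when $j'<j$, so I must check that the program's iteration over the suffix positions correctly encodes satisfaction of $\varphi$ along the wrapped interval up to the witness $j'$ for $\varphi'$. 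Crucially, every recursive call in the $\Until$ clause descends to a strict subformula ($\varphi$ or $\varphi'$), so induction on $\varphi$ is well-founded and no auxiliary recursion counter (as was needed for CTL) is required; periodicity of the lasso alone bounds the search.

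Finally, counting aspects gives $|\Asp((u,v))| = \Oo(|uv|) = \Oo(n)$, so \Cref{corollary:complexity-lemma} yields a decision procedure whose running time is exponential in the number of examples $m$ and in the number of aspects, and linear in $|\grammar|$, i.e. $\Oo(2^{\poly(mn)}\cdot|\grammar|)$, as claimed.
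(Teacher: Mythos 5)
Your proposal is correct and follows essentially the same route as the paper's own proof: establish, by induction on $\varphi$, that $\ltl$ (together with its dual clauses) decides $(u,v),p\models\varphi$ at every position $p$, then instantiate \Cref{lemma:mainlemma} with $\sigma_\tru=(1,\_)$, $\sigma_\fals=\no((1,\_))$ and apply \Cref{corollary:complexity-lemma} using $|\Asp((u,v))|=\Oo(|uv|)$. The added detail on the $\Next$ and $\Until$ cases, and the observation that structural induction suffices (no CTL-style counter) since every recursive call in the $\Until$ clause descends to a strict subformula, merely fleshes out what the paper's proof sketch leaves implicit.
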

\begin{proof}[Proof Sketch.]
  For each lasso $(u,v)\in \Pos$, positions
  $i\in[1,|u|], j\in[1,|v|]$, and LTL formula $\varphi$, we have that
  $(u,v),(i,\_)\models\varphi$ if and only if
  $((u,v), \route(\varphi), \clauses(\ltl), (i,\_))\Downarrow_p$ and
  $(u,v),(\_,j)\models\varphi$ if and only if
  $((u,v), \route(\varphi), \clauses(\ltl),
  (\_,j))\Downarrow_p$. Similarly, for each lasso $(u,v)\in \Neg$ we
  have that $(u,v),(i,\_)\not\models\varphi$ if and only if
  $((u,v), \route(\varphi), \clauses(\ltl), \no((i,\_)))\Downarrow_p$
  and $(u,v),(\_,j)\not\models\varphi$ if and only if
  $((u,v), \route(\varphi), \clauses(\ltl), \no((\_,j)))\Downarrow_p$. The proof is by induction on $\varphi$. We have
  $|\Asp((u,v))| = \Oo(|uv|)$ and the rest follows by
  \Cref{lemma:mainlemma} and \Cref{corollary:complexity-lemma}.
\end{proof}

\begin{figure}
  \centering
\begin{lstlisting}
  LTL($(u,v)$, $(i,\_)$, $n$) = match $n.\lab$ with

     conj -> LTL($(u,v)$, $(i,\_)$, $n.\child_1$) and LTL($(u,v)$, $(i,\_)$, $n.\child_2$)
     disj -> LTL($(u,v)$, $(i,\_)$, $n.\child_1$) or LTL($(u,v)$, $(i,\_)$, $n.\child_2$)
     neg -> LTL($(u,v)$, $\no(i,\_)$, $n.\child_1$)
     $\Next$ -> if $i < |u|$ then LTL($(u,v)$, $(i+1,\_)$, $n.\child_1$) else LTL($(u,v)$, $(\_,1)$, $n.\child_1$)

     $\Until$ -> any (LAM$i'.$ LTL($(u,v)$, $(i',\_)$, $n.\child_2$) and
                 all (LAM$i''.$ LTL($(u,v)$, $(i'',\_)$, $n.\child_1$)) $[i,i')$) $[i,|u|]$
           or    all (LAM$i'.$ LTL($(u,v)$, $(i',\_)$, $n.\child_1$)) $[i,|u|]$ and
                 any (LAM$j.$ all (LAM$j'.$ LTL($(u,v)$, $(\_,j')$, $n.\child_1$)) $[1,j)$
                              and LTL($(u,v)$, $(\_,j)$, $n.\child_2$)) $[1,|v|]$
     $x$ -> $u(i) = x$

\end{lstlisting}
\begin{lstlisting}

  LTL($(u,v)$, $(\_,j)$, $n$) = match $n.\lab$ with

     conj -> LTL($(u,v)$, $(\_,j)$, $n.\child_1$) and LTL($(u,v)$, $(\_,j)$, $n.\child_2$)
     disj -> LTL($(u,v)$, $(\_,j)$, $n.\child_1$) or LTL($(u,v)$, $(\_,j)$, $n.\child_2$)
     neg -> LTL($(u,v)$, $\no(\_,j)$, $n.\child_1$)
     $\Next$ -> LTL($(u,v)$, $(\_,\wraparound(j+1))$, $n.\child_1$)

     $\Until$ -> any (LAM$j'.$ LTL($(u,v)$, $(\_,j')$, $n.\child_2$) and
                all (LAM$j''.$ LTL($(u,v)$, $(\_,j'')$, $n.\child_1$)) $[j,j')$) $[1,|v|]$

     $x$ -> $v(j) = x$
\end{lstlisting}
\caption{$\ltl$ evaluates the LTL formula $\varphi$ pointed to by $n$ over
  lasso $(u,v)$ and verifies that $(u,v)\models \varphi$. }
  \label{fig:ltl-clauses}
\end{figure}

\section{Context-Free Grammars}
\label{sec:cfg}
In this section, we consider another problem involving separation of
finite words. The goal is to synthesize a context-free grammar that
generates all positively-labeled words and no negatively-labeled
words. We derive a decision procedure as before by writing a $\lang$
program.

\subsection{Separating Words with Context-Free Grammars}
\label{sec:separ-words-with}

\begin{problem}[Context-Free Grammar Separation]
  Given finite sets $\Pos$ and $\Neg$ of finite words over an alphabet
  $\Sigma$, as well as a (meta-)grammar $\grammar$, synthesize a
  context-free grammar $G$ over nonterminals $\nt$, terminals
  $\Sigma$, and axiom $S\in \nt$, such that $G\in L(\grammar)$ and
  $P\subseteq L(G)$ and $N \cap L(G) = \emptyset$, or declare no such
  grammar exists.
\end{problem}

The semantics of context-free grammars (CFGs) is standard: a word is
generated by a grammar if we can build a parse tree for it using the
productions. We want to represent CFGs as syntax trees and then write
a program that reads such trees and evaluates whether a given word is
generated by the represented grammar. The syntax trees can organize
productions along, say, the right spine, with their right-hand sides
in left children as suggested below.  Note that the ranked alphabet
$\Delta$ uses a binary symbol $\lhs(A)$ and nullary symbol $\rhs(A)$
for each $A\in\nt$ to distinguish between occurrences of $A$ in the
left and right-hand sides of a production. 
We also use a binary symbol $\topp(S)$ to distinguish the root of the
syntax tree, as well as a nullary symbol $\eend$ to signal the end of
productions along the right spine. Terminals $a\in\Sigma$ are
represented as nullary symbols $\term(a)$. See below with a grammar on
the left and its syntax tree on the right.

\begin{minipage}[t]{0.45\linewidth}
  \begin{align*}
    \\
    S \,\, &\longrightarrow\,\, a \, S \, b \quad|\quad c
  \end{align*}
\end{minipage}%
\begin{minipage}[t]{0.45\linewidth}
  \begin{align*}
    \begin{tikzpicture}[thick]
      \node[draw=none] (S1) at (0,0) {$\topp(S)$} ;
      \node[draw=none] (S2) at (1.5,-0.25) {$\lhs(S)$} ;
      \node[draw=none] (Dot) at (-1.5,-0.25) {$\cdot$} ;
      \node[draw=none] (A) at (-2.2,-1) {$\term(a)$} ;
      \node[draw=none] (S3) at (-0.8,-1) {$\cdot$} ;
      \node[draw=none] (rhsS) at (-1.5, -1.75) {$\rhs(S)$} ;
      \node[draw=none] (termb) at (0, -1.75) {$\term(b)$} ;
      \node[draw=none] (epsilon) at (0.8,-1) {$\term(c)$} ;
      \node[draw=none] (end) at (2.2,-1) {$\eend$} ;
      \draw [-] (S3) edge[black] (rhsS) ;
      \draw [-] (S3) edge[black] (termb) ;
      \draw [-] (S1) edge[black]  (S2) ;
      \draw [-] (S1) edge[black] (Dot) ;
      \draw [-] (Dot) edge[black] (A) ;
      \draw [-] (Dot) edge[black] (S3) ;
      \draw [-] (S2) edge[black] (epsilon) ;
      \draw [-] (S2) edge[black] (end) ;
    \end{tikzpicture}
  \end{align*}
\end{minipage}

\subsection{Decidable Learning for Context-Free Grammars}
\label{sec:progr-eval-cont}
We want a program that evaluates a CFG syntax tree $G$ to verify
whether $w\in L(G)$ for an input word $w$. What kind of state is
needed? Intuition suggests the $\lang$ evaluator will be similar to
the one for regular expressions, and that we should use pairs of
positions. The main difference is the more flexible recursion afforded
by nonterminals. Consider reading the syntax tree above starting at
the root labeled by $\topp(S)$, with \dblqt{$a\,S\, b$} in the left
subtree and the rest of the productions on the right. To verify that
$w$ is generated by $S$, the program should move to the right-hand
sides of the two $S$-productions and check whether $w$ is generated by
\emph{either} of these. Concatenation in the right-hand sides of
productions can be handled just like for regular expressions by
guessing a split for $w$ and then verifying the guess in the
subtrees. But upon reading, say, $\rhs(S)$ in the subtree for
\dblqt{$a\,S\,b$}, the program should navigate \emph{up} to find the
$S$ productions and enter the subtrees corresponding to their
right-hand sides in order to parse the current subword and verify its
membership in $L(S)$. This is accomplished by entering a state
$\reset(S)$ that causes the program to navigate to the root of the
syntax tree and then move downward in a state $\find(S)$ to find and
enter the right-hand sides of all productions for $S$.

The recursion afforded by the nonterminals introduces a subtlety when
verifying \emph{non-membership} of a word in the grammar (similar to
the subtlety discussed for CTL path quantifiers in
\Cref{sec:modal-expr-separ}). We return to this point after
formalizing the membership checking part of the program.

\subsubsection{Verifying Membership}
\label{sec:verify-memb}
We write a program $\cfg$ that evaluates an input CFG syntax tree $G$
over a word $w$ and verifies that $w\in L(G)$. The states consist of
ordered pairs of word positions as well as some extra information
related to moving up and down on the syntax tree, along with duals:
\begin{align*}
  \Asp(w)  &\coloneqq \{x, \dual(x) \,:\, x\in X(w)\}\\
  X(w) &\coloneqq \subs(w)\,\cup\,\{ (s,\find(A)), (s,\reset(A)) \,:\, s\in\subs(w),\, A\in\nt \}\\
  \subs(w) &\coloneqq \{ (l,r) \,:\, 1\le l\le r \le |w|+1 \}
\end{align*}
\noindent
Signature functions are the same as those for regular expressions. The
clauses for $\cfg$ (duals omitted) are shown in \Cref{fig:cfg}. The
program is designed to start from the state $\reset(S)$, from which it
proceeds to find and enter each of the productions for the starting
nonterminal $S$.

\begin{figure}
  \centering
  \begin{minipage}[c]{0.95\linewidth}
\begin{lstlisting}
  CFG($w$, $(l,r)$, $n$) = match $n.\lab$ with
     $\cdot$ -> any (LAM$x$. CFG($w$, $(l,x)$, $n.\child_1$) and CFG($w$, $(x,r)$, $n.\child_2$) $[l,r]$
     $\rhs(z)$ -> CFG($w$, $(l,r)$, $\reset(z)$, $n.\up$)
     $\term(x)$ -> $r = l+1$ and $w(l) = x$

  CFG($w$, $(l,r)$, $\reset(z)$, $n$) = match $n.\lab$ with
     $\topp(z)$ -> CFG($w$, $(l,r)$, $n.\child_1$) or CFG($w$, $(l,r)$, $\find(z)$, $n.\child_2$)
     $\topp(x)$ -> CFG($w$, $(l,r)$, $\find(z)$, $n.\child_2$)
     $\_$     -> CFG($w$, $(l,r)$, $\reset(z)$, $n.\up$)

  CFG($w$, $(l,r)$, $\find(z)$, $n$) = match $n.\lab$ with
     $\lhs(z)$ -> CFG($w$, $(l,r)$, $n.\child_1$) or CFG($w$, $(l,r)$, $\find(z)$, $n.\child_2$)
     $\lhs(x)$ -> CFG($w$, $(l,r)$, $\find(z)$, $n.\child_2$)
\end{lstlisting}
  \end{minipage}
\caption{$\cfg$ evaluates an input CFG syntax tree $G$ pointed to by
  $n$ against word $w$ and verifies that $w\in L(G)$.}
  \label{fig:cfg}
\end{figure}

Consider the operation of $\cfg$ over a word $w$ and a grammar $G$
that has a production like $A \rightarrow AA$. Notice that the program
could read this production arbitrarily many times in the same state by
always choosing to split $w$ into $\epsilon$ and $w$ when reading the
right-hand side $\dblqt{AA}$. This would cause it to verify
recursively that $\epsilon\in L(A)$ and $w\in L(A)$, which could
repeat again and again. Nevertheless, if indeed $w\in L(G)$, then
there is a finite proof for
$(w, \route(G), \clauses(\cfg), ((1, |w|+1),\reset(S))) \Downarrow_p$
that can be obtained by following any correct derivation of $w$ from
the grammar. The case for $w\notin L(G)$ is more subtle.

\subsubsection{Verifying Non-Membership}
\label{sec:non-membership}
Suppose now that $w\notin L(A)$ for a production like
$A \rightarrow AA$. How should $\cfg$ verify this? There is no
derivation to follow, and the program might loop forever by entering
the right-hand side and reading $AA$, which will cause it to read all
$A$ productions, which will cause it to read $AA$, and so on, with no
guarantee that subwords become smaller in each recursive call. This
termination issue can be dealt with in a few ways, e.g., by adding
states to keep track of the depth of recursion. But there is a
simpler solution.

It turns out that the duals for the $\cfg$ clauses in \Cref{fig:cfg}
are sufficient for verifying non-membership, provided all input
grammars are in \emph{Greibach normal form} (GNF). Productions in GNF
grammars have the form $A\rightarrow a(\nt)^*$, with $a\in \Sigma$ and
$A\in \nt$. Intuitively, this restriction helps because it makes
proofs of non-membership finite: subwords must become smaller each
time the program recursively checks a given nonterminal. To see this,
consider verifying $\mathit{aba}\notin L(S)$ for the GNF grammar:
\begin{align*}
  S \,\,\longrightarrow \,\, a\,S \,\,\,\,|\,\,\,\, b
\end{align*}
The word $\mathit{aba}$ is clearly not generated by the second
production. To show it is not generated by the first, we show there is
no way to split $\mathit{aba}$ into $w_1w_2$ so that $w_1$ is
generated by $a$ and $w_2$ is generated by $S$. If $w_1\neq a$ then
the subproof for that split can end. Otherwise $w_1 = a$, and thus
$|w_2| < |w|$, and hence the subproof for $w_2\notin L(S)$ will be
finite by induction on word length. For any GNF grammar $G$ and word
$w\notin L(G)$, there is a finite proof for
$(w, \route(G), \clauses(\cfg), \no((1, |w| + 1),\reset(S)))
\Downarrow_p$, and the argument does not in fact rely on the precise
form of GNF productions; it works for more general productions of the
form $A\rightarrow \alpha$ with
$\alpha\in (\Sigma\sqcup\nt)^*\Sigma(\Sigma\sqcup\nt)^*$, i.e., those
that involve at least one terminal. 
We call a grammar \emph{productive} if each of its productions meets
this requirement. As long as input grammars are productive, the dual
clauses for those from \Cref{fig:cfg} correctly verify
non-membership. Note that all context-free languages can be
represented by productive CFGs, provided we append an empty production
to include the empty word if needed. Therefore, we assume that the
input meta-grammar $\grammar$ encodes only productive
CFGs\footnote{
  Alternatively, we can use another automaton to verify that input
  trees encode productive grammars. The product of this automaton with
  the meta-grammar automaton $\aut_\grammar$ can itself be viewed as a
  meta-grammar which enforces productivity.
}.

\begin{theorem}
  CFG separation for finite sets of words $\Pos$, $\Neg$, and grammar
  $\grammar$ enforcing productivity, is decidable in time
  $\Oo(2^{\poly(k)}\cdot |\grammar|)$, where $k=mn^2\cdot|\nt|$,
  $n = \max_{w\in P\cup N}|w|$, and $m = |P| + |N|$.
\end{theorem}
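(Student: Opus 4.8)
The plan is to follow the template established by the earlier separation theorems: prove that the program $\cfg$ from \Cref{fig:cfg}, together with its dual clauses, faithfully evaluates both membership and non-membership of a word in the language of a CFG syntax tree, count the aspects used per structure, and then invoke \Cref{lemma:mainlemma} and \Cref{corollary:complexity-lemma}. Writing $L_G(A)$ for the set of words derivable from nonterminal $A$ in the grammar encoded by syntax tree $G$, the central claim is that for every word $w$, every pair of positions $(l,r)$ with $1\le l\le r\le|w|+1$, every productive CFG syntax tree $G$, and every $A\in\nt$, we have $w(l,r)\in L_G(A)$ if and only if $(w,\route(G),\clauses(\cfg),((l,r),\reset(A)))\Downarrow_p$, and dually $w(l,r)\notin L_G(A)$ if and only if $(w,\route(G),\clauses(\cfg),\dual((l,r),\reset(A)))\Downarrow_p$. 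Instantiating $A=S$ and $(l,r)=(1,|w|+1)$, with $\sigma_\tru=((1,|w|+1),\reset(S))$ and $\sigma_\fals=\dual((1,|w|+1),\reset(S))$, discharges the hypotheses of the meta-theorem with $D=\{\tru,\fals\}$. I would prove the two biconditionals by a single simultaneous induction so that the De Morgan duality between the clauses and their duals is respected throughout.

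First I would settle the membership direction. Given a parse tree witnessing $w(l,r)\in L_G(A)$, I construct a finite $\Downarrow_p$ proof by mirroring the derivation: the $\reset(A)$ state walks up to $\topp(S)$ and the $\find(A)$ state walks down the right spine to locate and enter the right-hand sides of the $A$-productions, choosing via \linl{or} the production applied at the root of the parse tree; concatenation inside a right-hand side is handled exactly as for regular expressions, guessing the split dictated by the parse tree; and each occurrence of $\rhs(B)$ triggers a move \emph{up} into state $\reset(B)$ to recurse on the corresponding subword. Because a parse tree is finite, this produces a finite proof, and the argument is a routine structural induction on the parse tree. Crucially, the existence of non-terminating runs (for instance, repeatedly splitting off $\epsilon$ at a production $A\to AA$) is irrelevant here, since we only require that \emph{one} finite proof exist.

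The main obstacle is the non-membership direction, where no derivation is available to guide the construction and naive recursion need not terminate. Here I would use \emph{productivity} crucially, arguing by induction on the subword length $r-l$. Dualizing the clauses turns the \linl{or} over productions into an \linl{all} (the word must fail under \emph{every} $A$-production) and the \linl{any} over splits into an \linl{all} (the word must fail for \emph{every} split). For a productive production $A\to\alpha$, the right-hand side $\alpha$ contains at least one terminal, so in any split of $w(l,r)$ across the concatenation structure of $\alpha$ that terminal consumes at least one position; consequently every nonterminal occurring in $\alpha$ is assigned a subword strictly shorter than $w(l,r)$. By the induction hypothesis these strictly-shorter non-membership subgoals have finite dual proofs, and since there are finitely many productions and finitely many splits, the entire dual proof tree is finite. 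I would take care to organize the induction so that the $\reset$/$\find$ navigation, which walks the fixed syntax tree without changing the subword, is absorbed into a subsidiary argument bounded by $|G|$, while the word length strictly decreases only when a nonterminal occurrence in a right-hand side is expanded; this keeps the well-founded measure on word length intact.

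Finally, I would count aspects and apply the complexity corollary. We have $|\subs(w)|=\Oo(|w|^2)$, hence $|X(w)|=\Oo(|w|^2\cdot|\nt|)$ and, after doubling for duals, $|\Asp(w)|=\Oo(|w|^2\cdot|\nt|)$. With $m$ examples and $n=\max_{w\in \Pos\cup \Neg}|w|$, the number of aspects per structure is $\Oo(n^2\cdot|\nt|)$, so the governing parameter is $k=mn^2\cdot|\nt|$. By \Cref{lemma:mainlemma} and \Cref{corollary:complexity-lemma} we obtain decidability in time $\Oo(2^{\poly(k)}\cdot|\grammar|)$, using the assumption that $\grammar$ enforces productivity, which is exactly what makes the dual clauses certify non-membership with finite proofs.
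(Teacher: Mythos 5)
Your proposal is correct and follows essentially the same route as the paper: establish the two biconditionals (membership via a finite proof mirroring a derivation, non-membership via induction on subword length using productivity to force strict decrease at every nonterminal expansion), note $|\Asp(w)|=\Oo(|w|^2\cdot|\nt|)$, and conclude by the meta-theorem and complexity corollary. Your explicit strengthening of the induction hypothesis to arbitrary nonterminals $A$ and the separation of the $\reset$/$\find$ navigation into a subsidiary finiteness argument are just fleshed-out versions of the details the paper leaves implicit in its sketch and in the discussion of Sections 6.2.1 and 6.2.2.
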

\begin{proof}[Proof Sketch.]
  Fix a word $w$. For every $(i,j)\in \subs(w)$ and for every $G$ we
  have that $w(i,j)\in L(G)$ if and only if
  $(w,\route(G),\clauses(\cfg),
  ((i,j),\reset(S)))\Downarrow_p$. Similarly, we have
  $w(i,j)\notin L(G)$ if and only if
  $(w,\route(G),\clauses(\cfg),
  \no((i,j),\reset(S)))\Downarrow_p$. The proof is by induction on $w$
  and $G$. We have $|\Asp(w)| = \Oo(|w|^2\cdot |\nt|)$ and the theorem
  follows by \Cref{lemma:mainlemma} and \Cref{corollary:complexity-lemma}.
\end{proof}

\section{First-Order Logic over Rational Numbers with Order}
\label{sec:rationals}
In this section, we consider learning first-order logic queries over
an infinite domain, namely, the structure $(\Rat, <)$ consisting of
the rational numbers $\Rat$ with the usual linear order $<$. The
learning problem requires labeled $k$-tuples of rational numbers to be
separated by a \emph{query} in $\FOk$, i.e., a formula in first-order
logic with $k$ variables. We derive a decision procedure by writing a
$\lang$ evaluator for $\FOk$ over $(\Rat,<)$.

\subsection{Learning Queries over Rational Numbers with Order}
\label{sec:learn-quer-over}
We consider the following problem.
\begin{problem}[Learning $\FOk$ Queries over $(\Rat,<)$]
  Given finite sets $\Pos$ and $\Neg$ of $k$-tuples over $\Rat$ and a
  grammar $\grammar$ for $\FOk$ over $(\Rat,<)$, synthesize
  $\varphi(\overline{x})\in L(\grammar)$ such that
  $P \subseteq \{t\in \Rat^k \,\,|\,\, (\Rat,<), t \models
  \varphi(\overline{x})\}$\footnote{ We are abusing notation and
    treating $t$, a $k$-tuple of rationals, as an assignment to the
    $k$, ordered free variables of $\varphi$.} and
  $N \subseteq \{t\in \Rat^k \,\,|\,\, (\Rat,<), t \not\models
  \varphi(\overline{x})\}$, or declare no such formula exists.
\end{problem}

\noindent A ranked alphabet $\Delta$ for $\FOk$ has, for any variables
$x,y$, the unary symbols $\dblqt{\forall x}$, $\dblqt{\exists x}$ and
nullary symbols $\dblqt{x < y}$, $\dblqt{x=y}$, in addition to the
symbols for Boolean operations. 
Note that, in this problem, the class of structures $\Mm$ for the
language is the set $\Rat^k$, and so a single ``structure'' is a tuple
of rationals $t\in\Rat^k$. For $t\in \Rat^k$, the semantics is given
by $t\models \varphi \Leftrightarrow (\Rat,<), t\models\varphi$.

\subsection{Decidable Learning for First-Order Logic Queries over
  Rationals}
\label{sec:progr-eval-first}
Given $t\in \Rat^k$ and $\varphi\in\FOk$, what semantic information do
we need to verify $(\Rat,<), t\models\varphi$?  Consider evaluating a
query $\varphi(x,y,z)$ on
$t=(\nicefrac{1}{2},3, \nicefrac{4}{3})\in P$. Our program might start
with an assignment $\gamma$ that lets it remember $(x,y,z)$ maps to
$(\nicefrac{1}{2},3, \nicefrac{4}{3})$. With the right functions, it
can easily verify atomic formulas by simply checking
$\gamma(x) < \gamma(y)$ or $\gamma(x) = \gamma(y)$. When the program
reads, say, \dblqt{$\exists x$}, it must carry forward some finite
amount of information, which thus excludes tracking the precise values
for the variables, of which there are infinitely many.

The main idea is that evaluating atomic formulas does not require the
precise values of the variables: the \emph{order between variables} is
all that is needed to evaluate $\FOk$ formulas over $(\Rat,
<)$. 
  In our example, we have
  $t=(\nicefrac{1}{2},3, \nicefrac{4}{3})$ corresponding to the
  assignment
  $\{x\mapsto \nicefrac{1}{2}, y\mapsto 3, z\mapsto
  \nicefrac{4}{3}\}$, and so the program begins in a state encoding
  that $x < z < y$. Suppose it reads the formula
  $\exists x\forall y\, (x < y)$. First it reads \dblqt{$\exists x$}
  and branches (disjunctively) on all of the finitely-many
  \emph{distinguishable} choices for where to place $x$ relative to
  the other variables while leaving the others in the same relative
  positions. We preserve $z < y$, but $x$ can appear in any of several
  positions: $x < z < y$, $x=z < y$, $z < x < y$, $z < x=y$,
  $z < y < x$. From each of these states the program reads
  \dblqt{$\forall y$} and branches (conjunctively) on every choice for
  where to place $y$. It eventually rejects the formula because $y$
  can always be placed strictly below $x$ in the second branching
  step.

  \Cref{fig:rat} shows a program $\rat$ that evaluates $\FOk$ formulas
  over tuples of rational numbers.  The states of $\rat$ record an
  ordering between $k$ variables from a set $V$, including whether two
  variables are equal, and thus they correspond to the \emph{total
    preorders} on $V$, denoted $\pre(V)$. We use \dblqt{$\gtrsim$} as
  a pattern variable to denote a preorder. For a given tuple $t$ we
  have
\begin{align*}
  \Asp(t) \coloneqq \left\{\, \gtrsim,\, \no(\gtrsim) \,:\,\,\,
  \gtrsim\,\,\in\pre(V)\,\right\}.
\end{align*}
Signature functions include Boolean functions for checking ordering
and equality in a given preorder $\gtrsim$, which we denote by
$\islt(x,z,\gtrsim)$ and $\iseq(x,z,\gtrsim)$ and define by:

\vspace{0.1in}
\begin{minipage}[t]{\textwidth}
\centering
\begin{tabularx}{0.85\textwidth}{l l l l l l l}
  &$\isgeq(x,z,\gtrsim)$ &$\coloneqq$ &$x \gtrsim z$ \qquad\qquad
  &$\iseq(x,z,\gtrsim)$ &$\coloneqq$ &$x \gtrsim z \,\,\wedge\,\, z
                       \gtrsim x$ \\
  &$\islt(x,z,\gtrsim)$ &$\coloneqq$ & 
 $\neg\isgeq(x,z,\gtrsim)$\qquad\qquad
  &$\isneq(x,z,\gtrsim)$ &$\coloneqq$ &$\neg\iseq(x,z,\gtrsim)$
\end{tabularx}
\end{minipage}\hfill
\vspace{0.05in}

\noindent State functions include $\place(x,\gtrsim)$, which computes
the set of all total preorders that place $x\in V$ in a new position
but agree with $\gtrsim$ on variables in $V\setminus \{x\}$, defined
as:
\begin{align*}
  \place(x,\gtrsim) \coloneqq \,\, \left\{\, \gtrsim'\,\in\,\pre(V)\,\,:\,\,
  y \gtrsim' z \Leftrightarrow y \gtrsim z,\, \forall y,z\in
  V\setminus\{x\} \,\right\}
\end{align*}
\begin{theorem}
  Learning queries in $\FOk$ over the rational numbers with order,
  with sets of $k$-tuples $\Pos$ and $\Neg$ and grammar $\grammar$, is
  decidable in time $\Oo(2^{\poly(mk^k)}\cdot|\grammar|)$, where
  $m=|\Pos|+|\Neg|$.
\end{theorem}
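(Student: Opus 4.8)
The plan is to follow the template used for the earlier languages: I would state a semantic correspondence between the satisfaction relation and the program's termination predicate $\Downarrow_p$, prove it by structural induction on formulas, count the aspects, and then invoke \Cref{lemma:mainlemma} together with \Cref{corollary:complexity-lemma}. For a tuple $t\in\Rat^k$ read as an assignment $\gamma:V\to\Rat$ of the $k$ variables, write $\gtrsim_\gamma\,\in\pre(V)$ for the induced total preorder (so $x\gtrsim_\gamma y$ iff $\gamma(x)\ge\gamma(y)$). The correspondence I would prove is that, for every $\FOk$ formula $\varphi$ and \emph{every} assignment $\gamma$,
\[
(\Rat,<),\gamma\models\varphi \iff (t,\route(\varphi),\clauses(\rat),\gtrsim_\gamma)\Downarrow_p,
\]
and dually $(\Rat,<),\gamma\not\models\varphi \iff (t,\route(\varphi),\clauses(\rat),\no(\gtrsim_\gamma))\Downarrow_p$, with the dual clauses discharging the second statement exactly as in the previous proofs. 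Quantifying universally over $\gamma$ in the induction hypothesis is what lets the quantifier cases rebind a variable.

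The induction is over the structure of $\varphi$. The atomic cases rest on the fact that the truth of $x<y$ and $x=y$ under $\gamma$ depends only on $\gtrsim_\gamma$: one checks that $\islt(x,y,\gtrsim_\gamma)$ and $\iseq(x,y,\gtrsim_\gamma)$ coincide with $\gamma(x)<\gamma(y)$ and $\gamma(x)=\gamma(y)$. The Boolean connectives are routine, with negation handled by switching to the dual clause. The crux, and the step I expect to be the main obstacle, is the quantifier case. For $\exists x.\psi$ the semantics ranges over the infinitely many rationals $q$, i.e.\ over the assignments $\gamma[x\mapsto q]$, whereas the program only branches disjunctively over the finite state set $\place(x,\gtrsim_\gamma)$. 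Bridging this gap requires the lemma that the realizable preorders $\{\gtrsim_{\gamma[x\mapsto q]}\,:\,q\in\Rat\}$ are \emph{exactly} $\place(x,\gtrsim_\gamma)$. Soundness is immediate, since each $q$ yields a preorder agreeing with $\gtrsim_\gamma$ on $V\setminus\{x\}$. Completeness is where \emph{density and unboundedness} of $(\Rat,<)$ enter: given any target preorder in $\place(x,\gtrsim_\gamma)$, one can place $x$ strictly between two adjacent variable values, equal to an existing value, or strictly above or below all of them, realizing every admissible position for $x$ while the other variables stay fixed. Feeding this equivalence into the induction hypothesis applied to $\psi$ at each new preorder state gives the $\exists$ case; $\forall x.\psi$ is dual, using conjunctive branching over $\place(x,\gtrsim_\gamma)$.

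Finally I would count aspects and conclude. Since $\Asp(t)=\{\gtrsim,\no(\gtrsim):\,\gtrsim\in\pre(V)\}$ and the number of total preorders on $k$ elements (the ordered Bell number) is $O(k^k)$, we have $|\Asp(t)|=O(k^k)$. Taking $D=\{\tru,\fals\}$ with $\sigma_\tru=\gtrsim_t$ and $\sigma_\fals=\no(\gtrsim_t)$, the theorem follows from \Cref{lemma:mainlemma} and \Cref{corollary:complexity-lemma}, whose running time is exponential in the number of examples $m$ and the number of aspects $O(k^k)$ and linear in $|\grammar|$, yielding $\Oo(2^{\poly(mk^k)}\cdot|\grammar|)$.
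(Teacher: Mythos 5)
Your proposal is correct and follows essentially the same route as the paper: the paper's proof sketch simply says ``follows reasoning from previous sections'' (i.e., state the correspondence with $\Downarrow_p$, induct on $\varphi$, count aspects, apply \Cref{lemma:mainlemma} and \Cref{corollary:complexity-lemma}) and records $|\Asp(t)| = 2|\pre(V)| = \Oo(k^k)$, which is exactly your skeleton. The one step you rightly flag as the crux --- that density and unboundedness of $(\Rat,<)$ make $\place(x,\gtrsim_\gamma)$ coincide exactly with the preorders realizable as $\gtrsim_{\gamma[x\mapsto q]}$ --- is the content the paper leaves implicit in its informal discussion of $\place$ in \Cref{sec:progr-eval-first}, so you have filled in precisely what the sketch omits.
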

\begin{proof}[Proof Sketch]
  Follows reasoning from previous sections and uses
  \Cref{lemma:mainlemma} and \Cref{corollary:complexity-lemma}. For $t\in\Rat^k$
  and a set $V$ of $k$ variables, we have
  $|\Asp(t)| = 2|\pre(V)| = \Oo(k^k)$.
\end{proof}

\begin{figure}
  \centering
  \begin{minipage}[c]{0.7\linewidth}
\begin{lstlisting}
  Rat($t$, $\gtrsim$, $n$) = match $n.\lab$ with
     $\forall x$ -> all (LAM$p$. Rat($t$, $p$, $n.\child_1$)) $\place(x,\gtrsim)$
     $\exists x$ -> any (LAM$p$. Rat($t$, $p$, $n.\child_1$)) $\place(x,\gtrsim)$
     conj -> Rat($t$, $\gtrsim$, $n.\child_1$) and Rat($t$, $\gtrsim$, $n.\child_2)$
     disj -> Rat($t$, $\gtrsim$, $n.\child_1$) or Rat($t$, $\gtrsim$, $n.\child_2)$
     neg -> Rat($t$, $\no(\gtrsim)$, $n.\child_1$)
     $x < z$ -> if $\islt(x, z, \gtrsim)$ then True else False
     $x = z$ -> if $\iseq(x, z, \gtrsim)$ then True else False
\end{lstlisting}
\vspace{-0.1in}
  \end{minipage}
\caption{$\rat$ evaluates formula $\varphi$ pointed to by $n$ against
  a tuple $t$ of rational numbers and verifies
  $(\Rat,<), t \models\varphi$.}
  \label{fig:rat}
\end{figure}

\paragraph{Remark}
Structures like $(\Rat, <)$ have a special kind of automorphism group,
called an \emph{oligomorphic group}
(see~\cite{hodges93}). Oligomorphic automorphism groups have
finitely-many orbits in their action on $k$-tuples from the domain of
the structure, for every $k$. In the case of learning queries over
$(\Rat, <)$, the number $k$ is fixed, and the $\lang$ program
evaluates formulas by keeping track of these finitely-many orbits. It
checks atomic formulas in a given orbit represented by a total
preorder on variables, and when evaluating quantifiers it is able to
move to all ``nearby'' orbits. There are many other examples of such
structures, e.g. those appearing in constraint satisfaction problems
over infinite domains (see~\cite{bodirsky-csp}). It would be
interesting to explore decidable learning results in more domains like
these.

\section{Decidable Learning for String Programs}
\label{sec:discussion}

In this section, we consider Gulwani's language for string
programming~\cite{flashfill}, which is designed to express
transformations of a sequence $i$ of input strings into an output
string $o$ in the context of spreadsheets. This language, which we
refer to as $\str$, turns out to be FAC, with the caveat that loops
must use variables from a finite set. We next give an overview of the
syntax and semantics of $\str$; details can be found in the original
paper~\cite{flashfill}. Then we discuss how to implement a $\lang$
evaluator that reads $\str$ syntax trees and checks whether they map
an input sequence $i$ to an output $o$. The language, being one used
in practice, is considerably more complex than our other examples, and
so we only sketch the main ideas.

\subsection{$\str$ Overview}
\label{sec:str-overview}
Programs in $\str$ map finitely-many input strings $v_j$ to an output
$o$. A \emph{program} $P\in\str$ consists of a switch statement
$\swit((\varphi_1, e_1),...\,,(\varphi_n, e_n))$ that chooses the
expression $e_i$ whose condition $\varphi_i$ is the first in the
sequence that is true. The $\varphi_i$ are DNF formulas over atoms
$\mat(v_j,r,k)$, which hold if at least $k$ matches for a regular
expression $r$ can be found in the input $v_j$. The $e_i$ in the
switch statement have the form $\concat(f_1,...\,,f_n)$. They
concatenate expressions $f_i$ that come in three flavors: (1)
$\substr(v_j,p_1,p_2)$ selects the substring in $v_j$ between
positions $p_1$ and $p_2$, (2) $\constr(s)$ denotes a string literal
$s$, and (3) $\loope(\lambda x.\, e)$ iteratively appends the result
of evaluating $e$ until that result is $\bot$, which is a special
value for failure. During loop iteration $i$, the variable $x$ is
bound to $i$ in $e$.

The positions $p_i$ in $\substr(v_j,p_1,p_2)$ are either constant
integers $\cpos(k)$ or they have the form $\poss(r_1,r_2,c)$, where
the $r_i$ are regular expressions and $c$ is a linear integer
expression built from constants and loop variables, e.g. $2x + 3$. The
expression $\poss(r_1,r_2,c)$ is evaluated with respect to $v_j$, and
returns a position $t$ such that just to the left of $t$ in $v_j$
there is a match for $r_1$ and starting at $t$ there is a match for
$r_2$. Furthermore, it returns the $c^{\mathit{th}}$ such position, or
$\bot$ if not enough such positions exist. Note that regular
expressions were restricted in $\str$ to use Kleene star and
disjunction only in a particular way, which we ignore. It is no
trouble to write a $\lang$ evaluator for a generalization of $\str$
that allows unrestricted (extended) regular expressions like those
from \Cref{sec:regular-expressions}.

Consider a program that extracts capital letters of an input string
(\cite{flashfill}, example $5$).

\begin{center}
\begin{tabular}{| l || c |}
  \hline
  Input $v_1$ & Output $o$ \\ \hline\hline
  \textit{Principles Of Programming Languages} & \textit{POPL} \\ \hline
\end{tabular}
\begin{align*}
  \text{Program:} &\quad\loope(\lambda x.\,
  \concat(\mathsf{SubStr2}(v_1,\, \mathsf{UpperTok},\, x))) \\
  \text{where}&\quad\mathsf{SubStr2}(v_j,\, r,\, c) \,\equiv\, \substr(v_j,\, \poss(\epsilon, r, c),\, \poss(r, \epsilon, c))
\end{align*}
\end{center}
This program uses $\mathsf{SubStr2}(v_j, r, c)$ to compute the
$c^{\mathit{th}}$ match of the regular expression $r$ in $v_j$. This
is used to extract the $x^{\mathit{th}}$ upper case letter in
iteration $x$ of the loop, which is then appended to previously
extracted letters. The loop exits when the body evaluates to $\bot$,
which happens when there are no more matches for $\mathsf{UpperTok}$.

\subsection{Decidable Learning for $\str$}
\label{sec:decid-learn-str}
We describe how a $\lang$ program should evaluate the constructs in
$\str$. Fix a set of input strings $i=v_1,\ldots, v_n$ and an output
string $o$. The evaluator reads $P\in\str$ and checks that it maps $i$
to $o$. We mention specific choices for representing syntax trees as
needed.

The $\swit$ statement can be modeled with a ternary symbol
$\swit(\varphi,e,\mathsf{rest})$, where $\mathsf{rest}$ represents the
rest of the cases with nested operators of the same kind. Upon reading
$\swit$, the program branches to verify \emph{either} the conditional
$\varphi_1$ holds and $e_1$ produces $o$ \emph{or} $\varphi_1$ does
not hold and the rest of the $\swit$ produces $o$.

The DNF formulae $\varphi_i$ can be easily evaluated with the Boolean
operators in $\lang$. An atom $\mat(v_i,r,k)$ can be represented with
a binary symbol $\mat_{v_i}(r,k)$, one for each $v_i$, with right
child a \emph{unary} representation of integer $k$, i.e. $s^k(0)$. To
check $\mat_{v_i}(r,s^k(0))$, the program evaluates the right child to
determine the value of $k$. Crucially, it can reject if $k$ exceeds
$|\subs(v_i)|$, which upper bounds the maximum number of matches for
any regular expression over $v_i$. Having determined $k$, the program
can branch over all $\binom{N}{k}$ combinations of subwords that could
witness the requisite $k$ matches, with $N=|\subs(v_i)|$. For each
subword, the program executes $\reg$ (\Cref{fig:reg-clause1}) as a
subroutine to check whether it matches the regular expression in the
left child.

It remains to interpret $e$ and verify it produces an output $o$. We
represent $\concat(f_1,...\,,f_n)$ in a nested way like $\swit$, and
binary $\concat(f,f')$ is evaluated as for regular expressions by
branching on all ways to split $o$ (or one of its subwords) into
consecutive subwords $w$ and $w'$, with $f$ and $f'$ then verified to
produce $w$ and $w'$.

The expressions $f$ are verified to yield a given word as
follows. Literals $\constr(s)$ are represented with nested
concatenation and thus follow the same idea as
$\concat(f,f')$. Substrings $\substr(v_j,p_1,p_2)$ are modeled with
binary symbols $\substr_{v_j}(p_1,p_2)$, one for each $v_j$. Having
determined the values of positions $p_i$, the program can simply use a
function for equality of subwords. Constant positions $\cpos(k)$ are
determined as before except the program rejects if $k$ exceeds
$|v_j|$. To evaluate $\poss(r_1,r_2,c)$, represented as a ternary
operator, the program guesses a position $t$ in $v_j$ and verifies
existence of matches for $r_1$ and $r_2$ to the left and right of
$t$. It further verifies there are $c-1$, \emph{but not} $c$ such
positions to the left of $t$. This is accomplished by branching on the
possible $\binom{t-1}{c-1}$ combinations of positions and checking for
the requisite matches, and then checking the opposite for each of the
$\binom{t-1}{c}$ combinations. Finally, integer expressions
$c= k_1x + k_2$ 
can be evaluated by hardcoding rules for bounded arithmetic, because
the maximum value that loop variables can take is bounded by $|o|$,
which we discuss next.

Provided the number of loop variables is finite, the program can
evaluate loops using a map $\gamma$ from variables $\{x_i\}$ to
integers. The integers are bounded because the loop body $e$ must
produce a string of non-zero length (otherwise the loop terminates),
and loop expressions are only ever verified to produce words of length
no more than $|o|$. Since each iteration must productively decompose a
word of length bounded by $|o|$, we can use $|o|$ as a bound on the
range of $\gamma$. Thus the $\gamma$ have finite domain and range and
require finitely-many states. Now, suppose the program encounters a
loop $\loope(\lambda x.\, e)$ with current variable map $\gamma$, and
suppose it must verify the loop produces a word $w$. It first sets
$\gamma(x)=1$. Then it guesses a decomposition of $w$ into $w_1w_2$
such that $e$ evaluated with $\gamma$ produces $w_1$ and
$\loope(\lambda x.\, e)$ evaluated with
$\update{\gamma}{x}{\gamma(x)+1}$ produces $w_2$.

We conclude by \Cref{lemma:mainlemma} that learning $\str$ programs
from examples is decidable, even for the generalization that allows
unrestricted regular expressions (which~\cite{flashfill} disallows).


\section{Related Work}
\label{sec:related-work}

\emph{Expression Learning and Program Synthesis.} Our approach is
inspired by recent results for learning in finite-variable
logics~\cite{popl22}. 
  Proofs in that work involve direct
  automata constructions, and the results can be obtained with our
  meta-theorem by writing suitable evaluators. Our work generalizes
  the tree automata approach to general symbolic languages by
  separating decidable learning theorems into two parts: (1)
  identifying the underlying semantic aspects of the language in
  question and (2) programming with this new datatype in order to
  evaluate arbitrary expressions. The finite-variable restriction for
  the logics considered in~\cite{popl22} leads to finitely-many
  aspects— a finite set of assignments to some $k$ variables. But, as
  our work shows, this restriction is not necessary for decidable
  learning; several languages we consider do not use variables and it
  is unclear what a corresponding variable restriction would
  mean. Usual translations of regular expressions to monadic
  second-order logic formulae, for instance, do not stay within a
  finite-variable fragment. Nevertheless, the recursive semantics of
  regular expressions involves subwords, and there are only
  finitely-many subwords of a given word, which makes regular
  expressions finite-aspect checkable. 

Practical algorithms for some of the learning problems we address have
been explored previously, e.g. learning
LTL~\cite{neider-ltl-learning}, regular
expressions~\cite{jagadish-regexp-learning-08,fernau-regexp-learning-09},
and context-free
grammars~\cite{sakakibara-cfg-learning,langley-cfg-learning,vanlehn87-cfg-learning},
but decidable learning results with syntactic restrictions have not
been established.

Other recent work studies the parameterized complexity of learning
queries in first-order logic ($\FO$)~\cite{van-bergerem-22}, algorithms for learning in
$\FO$ with counting~\cite{learning-in-fo-with-counting}, and learning
in description logics~\cite{lutz-ijcai2019}. Applications for $\FO$
learning have emerged, e.g., synthesizing
invariants~\cite{aiken-fo-sep,parno-21,distAI,koenig-taming-search-space,ice,madhu-qda}
and learning program properties~\cite{inferring-rep-invariants,
  preconditions, learning-contracts}.

Expression learning is connected to program synthesis, and in
particular, programming by example~\cite{flashmeta}, where practical
algorithms have been used to automate tedious programming tasks, e.g.
synthesizing string programs~\cite{flashfill,flashfillplus},
bit-manipulating programs from templates~\cite{sketch}, or functional
programs from examples and type
information~\cite{type-directed-synth-osera,synth-refinement-types-nadia16}. Synthesis
with grammar restrictions follows work in the
\textsc{SyGus}~\cite{sygusJournal} framework. 

\emph{Automata for Synthesis.} Connections between automata and
synthesis go back to Church's problem~\cite{church63-journal} on
synthesizing finite state machines that manipulate infinite streams of
bits to meet a given logical specification. This was solved first by
B{\"u}chi and Landweber~\cite{BuchiLandweber69} for specifications in
monadic second-order logic, and later also by
Rabin~\cite{Rabin72}. The idea was to translate the specification into
an automaton, and to view synthesis of a transducer as the problem of
synthesizing a finite-state winning strategy in a game played on the
transition graph of the automaton. The result was a potentially large
transition system, not a compact program. The use of tree automata
that work over syntax trees was advanced in~\cite{madhuCSL11} and has
been used for practical algorithms in several program synthesis
contexts~\cite{fta-data-completion-scripts, Wang2017,
  WangWangDilligOOPSLA18,ecta,miltner-angelic,handa-rinard}.

\emph{Decidability in Synthesis.} Many foundational decidability
results in logic and synthesis of finite-state systems rely on
reductions to automata
emptiness~\cite{BuchiLandweber69,Rabin72,automata-logics-games,kpvPneuli,PR89,KMTV00,PR90}.
Recent decidability results for synthesis of uninterpreted programs
involved a reduction to emptiness of two-way tree
automata~\cite{uninterpretedsynth}. Decision procedures for
\textsc{SyGuS} problems in linear and conditional linear integer
arithmetic~\cite{farzan22,reps20-unrealizability} used grammar flow
analysis~\cite{gfa-1991} and an abstraction based on semi-linear
sets.

\emph{Automata for Learning vs. Graph Algorithms.} There is a large
body of work, e.g. see~\cite{Habel1992,courcelle}, on checking
properties of graphs expressed in monadic second-order logic. These
results involve translating logical properties into automata that read
\emph{decompositions of graphs} and accept if the represented graph
has the property. Our work is very differently motivated: we are
interested in properties of syntax trees defined over arbitrary
\emph{fixed} structures (e.g. unrestricted graphs like cliques or
grids), and the properties are motivated by semantics of complex
symbolic languages. Our automata constructions for \emph{learning}
are, conceptually, dual to these constructions from logical
specifications.

\emph{Definability in Monadic Second-Order Logic.} Foundational
results from logic and automata theory connect definability in monadic
second-order logic and recognizability by finite machines. These
results span various classes of structures, including finite words and
trees~\cite{weak-so-and-finite-aut-buchi, elgot-61,
  trakhtenbrot-61,doner-1970,thatcher-wright}, infinite words and
trees~\cite{Buchi1990, Rabin69}, and graphs with bounded tree
width~\cite{courcelle90}. It follows by definition that for any FAC
language, the semantics over any fixed structure can be captured by a
sentence in monadic-second order logic over syntax trees.


\section{Conclusion}
\label{sec:conclusion}
We introduced a powerful recipe for proving that a symbolic language
has decidable learning. It involves writing a program, i.e. semantic
evaluator, that operates over mathematical structures and expression
syntax trees for a given symbolic language. \emph{Finite-aspect
  checkable} languages have the property that the semantics of any
expression $e$ can be expressed in terms of a finite amount of
semantic information (aspects) that depends on the structure over
which evaluation occurs but not on the size of $e$. 
  This
  addresses a central question in expression learning with
  \emph{version space algebra} (VSA) techniques, especially those
  realized as tree automata: for which symbolic languages are these
  learning algorithms possible? One prevailing answer in the
  literature is that language operators should have \emph{finite
    inverses} (e.g., see~\cite{flashmeta,flashfillplus}). When
  operators have finite inverses a top-down tree automaton can be
  effectively constructed. Our work suggests a weaker requirement,
  namely, that it be sufficient to \emph{evaluate} arbitrarily large
  expressions or programs on specific examples by traversing syntax
  trees up and down using memory that is bounded by a function solely
  of the \emph{size of the example}. For instance, in
  \Cref{sec:rationals} we considered learning queries over the
  rational numbers with order. The ``inverse'' of $x < y$ is an
  infinite set of ordered pairs of rational numbers. Nevertheless, to
  evaluate a formula for a specific example, all that is needed is a
  bounded number of bits to encode the current ordering of variables.

We have also presented a set of interesting FAC languages that have
nontrivial semantic definitions using finitely-many aspects, and new
decidable learning results for each. We believe that many more can be
readily found using our meta-theorem.

Tree automata underlie many practical algorithms for synthesis based
on compactly representing large spaces of programs and
expressions~\cite{miltner-angelic,fta-data-completion-scripts,Wang2017,WangWangDilligOOPSLA18,ecta,handa-rinard,flashfill}. The
main idea is to efficiently represent classes of expressions which are
equivalent with respect to some examples. This idea originates with
version space algebra~\cite{tom-mitchell-vsa,mitchell97}, which
essentially amounts to a restricted form of tree automata working over
trees of bounded depth~\cite{vsa-ta}. Bringing the full tree automata
toolkit to bear on learning and synthesis, e.g. two-way power and
alternation, recognizes tree automata as a kind of basic building
block for a version space algebra \emph{over tree automata}. The
learning constructions from our work use semantic evaluators (compact,
effective descriptions of tree automata) as a basic building block and
combine them in specific ways to address specific learning
problems. Given this uniform technique of using tree automata as a
programming language, it would be interesting to build compact
representations and incremental algorithms for their construction and
emptiness that yield generic learning algorithms which scale. Bounding
the depth of expressions may make some of the constructions from this
paper feasible.

Tree automata have also been used in many other contexts in computer
science. In fixed-parameter tractable algorithms (e.g. Courcelle's
theorem~\cite{parameterized-complexity-flum-grohe}) and finite model
theory, they have been used to obtain generic algorithms for
$\MSO$-definable properties that work over tree decompositions of
graphs, while in temporal logic verification~\cite{vardi-ltl} they
have been used as acceptors of correct behaviors of systems. Their use
for learning in symbolic languages is an emerging new application of
tree automata. It would be interesting to study the \emph{theory} of
FAC languages in terms of expressiveness, language-theoretic
properties, and alternative characterizations.

\newpage
\bibliography{references}


\app{\newpage\appendix
\section{Detailed description of $\lang$}
\label{sec:app-language}
Programs in $\lang$ are parameterized by a language
$(\Ll,\Mm,\models)$, where the semantic function
$(\_\,\models\,\_) \,:\, \Ll\times\Mm\rightarrow D$ specifies the domain $D$ in
which expressions are interpreted. A program takes as input a
\emph{pointer} into the syntax tree for an expression $e\in\Ll$ as
well as a structure $M\in\Mm$. A program $P$ navigates up and down on
$e$ using a set of pointers to move from children to parent and parent
to children in order to evaluate the semantics of $e$ over the
structure $M$ and verify that $M\models e = d$ for some $d\in D$.

\subsection{Parameters}
\label{sec:parameters}
To write a $\lang$ program we specify two things: (1) the language
over which the program is to operate and (2) the program's auxiliary
\emph{states}, which correspond to semantic aspects, i.e. the
auxiliary information used in the definition of the language
semantics\footnote{
    We sometimes use \emph{states} and
    \emph{aspects} interchangeably. But, occasionally we use
    \emph{aspects} to distinguish auxiliary semantic information, with
    which we need not associate any operational meaning, from the
    operational meaning associated with \emph{states} in the context
    of a program.
}. Part (1) involves specifying (a) the syntax trees
for $\Ll$ in terms of a ranked alphabet $\Delta$ and (b) the signature
for structures $\Mm$, including a set of functions used to access the
data for a given $M\in\Mm$. Additionally, in part (a) we specify an
algebraic data type (ADT) that endows the symbols of $\Delta$ with
extra structure in order to allow programs to pattern match over the
alphabet. As an example, suppose we model universal quantification in
first-order logic by adding to $\Delta$ a unary symbol
\dblqt{$\forall x$} for each variable $x$ from some finite set of
variables. We could then treat \dblqt{$\forall$} as a unary
constructor to allow a program to match over all alphabet symbols that
represent a universally-quantified variable. We abuse notation and
write $\Delta$ for both the ranked alphabet and ADT when the context
is clear.

Part (2) is accomplished by specifying an ADT for the program
states. This ADT will typically be infinite, but any fixed structure
will use only a finite subset of it, provided the language is FAC. We
denote the state ADT by $\Asp$ and the subset pertaining to a given
structure $M$ by $\Asp(M)$. In some cases, $\Asp$ has little or no
structure. For instance, in modal logic it consists of nullary
constructors for the nodes of a Kripke structure. In other cases,
e.g., regular expressions (\Cref{sec:regular-expressions}), we use a
\emph{pair} constructor over positions in finite words with
$\Asp = \{(i,j)\in\Nat\times\Nat \,:\, i \le j\}$. We will clarify
these choices in each setting as needed.

The ADTs are each associated with a set of \emph{patterns}, which are
built using the ADT constructors together with variables from a set
$\varit$. We use $x,z\in\varit$ as pattern variables; these should not
be confused with variables used in expressions $\Ll$. The sets of
state and alphabet patterns are denoted by $\Asp(\varit)$ and
$\delvar$, respectively. Note that we do not assume $\Asp$ has a
finite signature. All we will need is that $\Asp(M)$ is finite for
every $M$ and that state and alphabet pattern matching is
computable. For the latter, we assume a computable function $\match$
that computes a unifying substitution for two members of
$\Asp(\varit)$ or $\delvar$ whenever possible.

\subsection{Semantics}
\label{sec:semantics}
The semantics for $\lang$ is given in \Cref{fig:big-step}. It defines
two relations
\begin{align*}
  (M, n, \clauses, \sigma) \Downarrow_p \qquad \text{and} \qquad (M,
  n, \clauses, e) \Downarrow_e.
\end{align*}
The predicate $\Downarrow_p$ holds for program configurations
$(M,n,\clauses,\sigma)$, with $\clauses$ being the set of clauses for
the program, $M$ a structure, $n$ a pointer, and $\sigma\in\Asp(M)$ a
state. The predicate $\Downarrow_e$ holds for expression
configurations $(M, n, \clauses, e)$, with $e$ being an expression
from the grammar in \Cref{fig:lang}. The assertion
$(M, n, \clauses, \sigma) \Downarrow_p$ can be read as follows: over
the structure $M$ and syntax tree pointed to by $n$, the program
consisting of clauses $\clauses$ terminates with success when started
in the state $\sigma$. Proofs for $\Downarrow_p$ involve finding the
matching clause for a state $\sigma$ and building a subproof for
$\Downarrow_e$ for the appropriate case of the \linl{match} statement
in the matching clause.

\subsection{Details}
\label{sec:details}

\subsubsection{Well-Formed Programs}
\label{sec:well-formed-programs}
We consider only \emph{well-formed} programs, which have
\emph{disjoint} and \emph{exhaustive} clauses. That is, for every
$M\in\Mm$ and $\sigma\in\Asp(M)$, there is precisely one clause $c$
for which $\match(\pat_c, \sigma)$ succeeds, where $\pat_c$ denotes
the state pattern for the clause $c$.

We emphasize that variables in $\lang$ programs are \emph{never} bound
to $\lang$ expressions. They are instead replaced either by components
of the syntax tree data type or the state data type. For example, a
variable $x$ might be bound to position $2$ in the word $w = abc$,
while a variable $y$ might be bound to the letter $b$. Variables are
\emph{bound} in $\lang$ programs by the state pattern at the beginning
of each clause, by the alphabet patterns in each case of a
\linl{match} statement, and in \linl{all} and \linl{any}
expressions. Well-formed programs do not have free variables.

\subsubsection{Computable Function Parameters}
\label{sec:comp-funct-param}
The functions $g\in S$ appearing in \linl{any} and \linl{all}
expressions compute finite sets, e.g., elements or sets of elements
from the domain of the structure. These elements are then bound to
variables in \linl{any} and \linl{all} expressions. For example, in a
totally-ordered structure like a word, we could use the function
$g(l,r) = [l, r]$, or a variant $g'(l,r) = [l+1,r]$ to compute sets of
consecutive positions. For convenience we allow functions $g$ to occur
in expressions that denote states. For example, if states are pairs of
word positions, we may write $(2,g(x))$, which can be evaluated to a
state once $x$ is bound. The condition in the premise of the
\textsc{Call} rule in \Cref{fig:big-step} uses a function called
$\norm$ to reduce an expression like this to a state. The functions
$g\in S$ and $f\in B$ in fact represent a family of functions, one for
each $M\in\Mm$. We write $\eval(M,g(v))$ and $\eval(M,f(v))$ to denote
the result of computing $f$ and $g$ in a structure $M$ with arguments
$v$.

\subsubsection{Negation}
\label{sec:negation}
The reader may wonder why there is no negation in the expression
syntax for $\lang$ and why the conditional for \linl{if} expressions
does not allow an arbitrary expression. These apparent restrictions
are only for simplicity. If we wanted to include negation and more
general conditionals, we could augment the states of any $\lang$
program to track the parity of the number of negations seen at any
point in a program execution; but we prefer to keep this complexity
out of the semantics. The effect of negation can easily be
accomplished by writing dual programs that implement dual operations
in particular states, as described in \Cref{sec:modal-expr-separ}.

\begin{figure}
\begin{mathpar}
  \inferrule*[left=Prog]
  {
    \tau = \match(c_i, \sigma)\\
    \tau' = \match_{c_i}(\tau(\alpha_k), n.\lab) \\
    (M, n, \clauses, \tau'(\tau(e_k))) \Downarrow_e
  }
  {
    (M, n, \clauses = \{\,c_1 \ldots c_m\,\}, \sigma) \Downarrow_p
  }

  \inferrule*[left=Call]
  {
    (M, n.c, \clauses, \sigma') \Downarrow_p \\
     \sigma' = \norm(M, \sigma(v))\\
  }
  {
    (M, n, \clauses, \hbox{\zlstinline!P($M$,\ $\sigma(v)$,\ $c$)!}) \Downarrow_e
  }

  \inferrule*[left=Bool]
  {
    \eval(M, f(v)) = \top\\
  }
  {
    (M, n, \clauses, f(v)) \Downarrow_e
  }\\

  \inferrule*[left=And]
  {
    (M, n, \clauses, e) \Downarrow_e \\
    (M, n, \clauses, e') \Downarrow_e \\
  }
  {
    (M, n, \clauses, e\ \hbox{\zlstinline!and!}\ e') \Downarrow_e
  }

  \inferrule*[left=True]
  {
  }
  {
    (M, n, \clauses, \hbox{\zlstinline!True!}) \Downarrow_e
  }\\

  \inferrule*[left=Then]
  {
    (M, n, \clauses, e_1) \Downarrow_e  \\
    \eval(M, f(v)) = \top \\
  }
  {
    (M, n, \clauses, \hbox{\zlstinline!if\ $f(v)$ then\ $e_1$ else\ $e_2$!}) \Downarrow_e
  }

  \inferrule*[left=Or1]
  {
    (M, n, \clauses, e) \Downarrow_e \\
  }
  {
    (M, n, \clauses, e\ \hbox{\zlstinline!or!}\ e') \Downarrow_e
  }\\

  \inferrule*[left=Else]
  {
    (M, n, \clauses, e_2) \Downarrow_e  \\
    \eval(M, f(v)) = \bot \\
  }
  {
    (M, n, \clauses, \hbox{\zlstinline!if\ $f(v)$ then\ $e_1$ else\ $e_2$!}) \Downarrow_e
  }

  \inferrule*[left=Or2]
  {
    (M, n, \clauses, e') \Downarrow_e \\
  }
  {
    (M, n, \clauses, e\ \hbox{\zlstinline!or!}\ e') \Downarrow_e
  }\\

  \inferrule*[left=All]
  {
    (M, n, \clauses, \{x\mapsto v_1\}(e)) \Downarrow_e \ \ \cdots\ \
    \ (M, n, \clauses, \{x\mapsto v_l\}(e)) \Downarrow_e \quad
    \eval(M, g(v)) = \{\, v_1\, \ldots\, v_l\,\}
  }
  {
    (M, n, \clauses, \hbox{\zlstinline!all (LAM$x$.\ $e$)\  $g(v)$!}) \Downarrow_e
  }\\

  \inferrule*[left=Any]
  {
    (M, n, \clauses, \{x\mapsto v_i\}(e)) \Downarrow_e \qquad v_i \in \eval(M, g(v))
  }
  {
    (M, n, \clauses, \hbox{\zlstinline!any (LAM$x$.\ $e$)\  $g(v)$!}) \Downarrow_e
  }
\end{mathpar}
\caption[LoF]{Semantics for $\lang$. If the node $n.c$ does not exist, then the \LeftTirNameStyle{Call} rule does not apply.
}
\label{fig:big-step}
\end{figure}

\newpage

\section{Translating $\lang$ Programs to Two-way Tree Automata}
\label{sec:translate}
If for every structure $M$, the states $\Asp(M)$ are computable and
finite, then a $\lang$ program can be translated to a two-way
alternating tree automaton over syntax trees. $\lang$ draws attention
to a small subset of tree automata that serve as semantic evaluators,
and which have state spaces and alphabets that can be highly
structured. Most $\lang$ programs we have considered consist of at
most a few clauses, with each clause handling the semantics for a
large set of related states.

\emph{Two-way alternating tree automata} over states $Q$ and alphabet
$\Delta$ assign to each state and symbol a positive Boolean formula
like the following
\begin{align*}
  \delta(q,a) \,\,= \,\, (q'',0) \vee (q',-1)\wedge (q,1) \wedge (q,2).
\end{align*}
This formula stipulates that in state $q$ reading symbol $a$, the
automaton can \emph{either} continue from the current position ($0$)
in state $q''$ or else it should succeed in several new positions:
from the parent ($-1$) in state $q'$ \emph{and} from the left ($1$)
and right ($2$) children, both in state $q$. More generally, the
transitions are of the following form:
\begin{align*}
  \delta(q,a) \in \Bb^{\mathtt{+}}(Q\times \{-1,0,...,k\}) \qquad
  \text{where} \,\, q\in Q, \,\, a\in \Delta,\,\, k = \arity{a},
\end{align*}
and they describe the viable next states and directions for the
machine. It can either move up ($-1$), down (numbers $> 0$), or stay
at the same node ($0$) on the input tree while changing
state. Satisfying Boolean assignments to the set
$Q\times\{-1,0,...,k\}$ describe strategies to build accepting
automaton runs along an input syntax tree, with the two components of
$Q$ and $\{-1,0,...,k\}$ corresponding to the \emph{next state} and
the \emph{direction to move}, respectively. We next explain how
programs in $\lang$ have a simple, straightforward translation to such
automata.

Given a well-formed program $P$, a structure $M$, the set $\Asp(M)$,
and a distinguished state $\sigma_i\in\Asp(M)$, we can build a two-way
alternating tree automaton $\aut(P, M)$ that accepts precisely the
syntax trees (rooted at $n$) for which
$(M, n, \clauses(P),\sigma_i) \Downarrow_p$ holds. The states of
$\aut(P, M)$ are the members of $\Asp(M)$ and the initial state is
$\sigma_i$. For each clause $c\in\clauses(P)$ of the form
\begin{center}
  \linl{$P$($M$,\ $\sigma(z)$,\ $n$) = match\ $n.\lab$ with\ $\alpha_1$ -> e_1\ $\ldots\,\, \alpha_m$ -> e_m}
\end{center}
the translation creates a set of transitions. For each matching
$\sigma\in\Asp(M)$ there is a transition for each $a\in \Delta$. We
write $\subst = \match_c(\alpha_i, a)$ to mean that $a$ matches the
alphabet pattern $\alpha_i$ in $c$ with unifying substitution $\subst$
and it does not match $\alpha_j$ for any $j < i$. We write $\tau(e)$
for the application of $\tau$ to $e$, which substitutes $\tau(x)$ for
all free occurrences of $x$ in $e$, for all $x$ in the domain of
$\tau$.

Suppose $\subst = \match(\pat_c,\sigma)$ for some clause $c$ with
cases \linl{$\alpha_i$ ->\ $e_i$}. For each symbol $a\in \Delta$, with
$\subst' = \match_c(\subst(\alpha_i), a)$, we have the transition
\begin{align*}
  \delta(\sigma, a) = \tranz(\subst'(\subst(e_i))), 
\end{align*}
with $\tranz$ described below. Any $a\in\Delta$ for which no alphabet
pattern matches, and which is therefore not covered above, is assigned
$\delta(\sigma, a) = \bot$. Thus a \linl{match} statement need not
specify what to do for alphabet symbols that, say, must never be read
in specific states of the program.

Expressions $e$ are translated into positive Boolean formulae using
the function $\tranz$ defined below.

\vspace{0.1in}
\begin{minipage}[t]{0.6\textwidth}
\centering
\begin{tabularx}{0.8\textwidth}{l l l l l l l}
  $\tranz(\text{\linl{True}})$ &$=$ & $\top$ & \quad &
  $\tranz(\text{\lstinline!$e_1$\ and\ $e_2$!})$ &$=$ & $\tranz(e_1) \wedge\tranz(e_2)$ \\
  $\tranz(\text{\linl{False}})$ &$=$ & $\bot$ & \quad &
  $\tranz(\text{\lstinline!$e_1$\ or\ $e_2$!})$ &$=$ & $\tranz(e_1) \vee \tranz(e_2)$ \\
  $\tranz(\text{\linl{$f$($v$)}})$ &$=$ & $\eval(M, f(v))$ & \quad & & &\\
\end{tabularx}
\end{minipage}

\begin{minipage}[t]{0.6\textwidth}
  \centering
\begin{tabularx}{0.8\textwidth}{l l}
  $\tranz(\text{\linl{if\ $f(v)$ then\ $e_1$ else\ $e_2$}})$ \quad &$=$\,\,\, $\begin{array}{l} \tranz(e_1)
                            \quad \text{if}\,\, \eval(M, f(v)) = \top \\
                            \tranz(e_2)
                            \quad \text{else}
                          \end{array}$ \\
  $\tranz(\text{\lstinline!all (LAM $x$.\ $e$)\ $g$($v$)!})$ \quad
  &$=$ \quad $\bigwedge_{v'\,\in\, \eval(M,\,g(v))}\tranz(\{x\mapsto v'\}(e))$\\
  $\tranz(\text{\lstinline!any (LAM $x$.\ $e$)\ $g$($v$)!})$ \quad
  &$=$ \quad $\bigvee_{v'\,\in\, \eval(M,\,g(v))}\tranz(\{x\mapsto v'\}(e))$\\
  $\tranz(\text{\linl{$P$($M$,\ $\sigma(v)$,\ $n.\dir$)}})$ \quad &$=$ \quad
                                                               $(\sigma,\dir)$
                                                               \quad
                                                               \text{where
                                                               }
                                                               $\sigma
                                                               = \norm(M,\sigma(v))$
\end{tabularx}
\end{minipage}
\vspace{0.1in}

We can now state one other well-formedness condition for $\lang$
programs. Namely, for every $M$, $\sigma\in\Asp(M)$, $a\in\Delta$, and
clause $c$ such that $\subst = \match(\pat_c,\sigma)$, if
$\subst' = \match_c(\subst(\alpha_i), a)$ for the case
\linl{$\alpha_i$ -> $\ e_i$} in $c$, then we must have:
\begin{align*}
  \tranz(\subst'(\subst(e_i))) \in \Bb^{\mathtt{+}}(\Asp\times
  \{-1,0,\ldots,\arity{a}\}).
\end{align*}
In other words, the movement of a program along the syntax tree must
respect symbol arities.

  \subsection{Example Construction}
\label{sec:example-construction}

Here we explicitly show the construction of an automaton for a
restricted regular expression language on the alphabet
$\{a,b\}$. Assume regular expressions as in \Cref{sec:regular-expressions}
but without union, intersection, and negation. Our evaluator
simplifies as follows:

\begin{lstlisting}
  Reg($w$, $(l,r)$, $n$) = match $n.\lab$ with
     $*$  -> if ($l = r$) then True else
              any (LAM$x$. Reg($w$, $(l,x)$, $n.\child_1$) and Reg($w$, $(x,r)$, $n.\stay$)) $[l+1,r]$
     $\,\cdot$  -> any (LAM$x$. Reg($w$, $(l,x)$, $n.\child_1$) and Reg($w$, $(x,r)$, $n.\child_2$)) $[l,r]$
     $x$  -> $r = l+1$ and $w(l) = x$
\end{lstlisting}

  Consider the word $w=\mathit{abb}$. The resulting two-way
  alternating automaton $\aut(\reg, w)$ is constructed as
  follows. See~\cite{tata} for definitions and results for such
  automata. The state set is
\begin{align*}
  Q=\{&(1,1), (1,2), (1,3), (1,4), (2,2), (2,3), (2,4), (3,3),
  (3,4), (4,4),\\
  &\dual(1,1), \dual(1,2), \dual(1,3), \dual(1,4), \dual(2,2),
    \dual(2,3),
    \\ &\dual(2,4), \dual(3,3),
\dual(3,4), \dual(4,4) \}.
\end{align*}

The initial state is $(1,|\mathit{abb}|+1)=(1,4)$. Here are just some
of the many transitions:

\begin{itemize}
\item[] $\delta((1,2),a) = \top$
\item[] $\delta((2,3),b) = \top$
\item[] $\delta((3,4),b) = \top$
\item[] $\delta((1,4),\cdot) = ((1,1),1) \wedge
  ((1,4),2) \vee ((1,2),1) \wedge
  ((2,4),2) \vee ((1,3),1) \wedge ((3,4),2)$
\item[] $\qquad\qquad\qquad \vee ((1,4),1) \wedge
  ((4,4),2)$
\item[] $\delta((1,3),\cdot)=\cdots$
\item[] $\delta((1,2),\cdot)=\cdots$
\item[] $\delta((1,1),\cdot)= ((1,1),1)\wedge ((1,1),2)$
\item[] $\delta((2,4),\cdot)=\cdots$
\item[] $\delta((2,3),\cdot)=\cdots$
\item[] $\delta((2,2),\cdot)=\cdots$
\item[] $\delta((3,4),\cdot)=\cdots$
\item[] $\delta((3,3),\cdot)=\cdots$
\item[] $\delta((4,4),\cdot)=\cdots$
\item[] $\delta((1,1),*) = \top$
\item[] $\delta((2,2),*) = \top$
\item[] $\delta((3,3),*) = \top$
\item[] $\delta((4,4),*) = \top$
\item[]
  $\delta((1,4),*) = ((1,2),1)\wedge((2,4),0) \vee
  ((1,3),1)\wedge((3,4,0)) \vee ((1,4),1)\wedge((4,4,0))$
\item[] $\delta((2,4),*) = \cdots$
\item[] $\delta((3,4),*) = \cdots$
\item[] $\delta((1,3),*) = \cdots$
\item[] $\delta((2,3),*) = \cdots$
\item[] $\delta((1,2),*) = \cdots$
\end{itemize}

All other transition formulas not already suggested above are
$\bot$.





\newpage

\section{Computation Tree Logic}
\label{sec:ctl}
We want a semantic evaluator for CTL syntax trees $\varphi$ over
pointed Kripke structures $G=(W,s,E,P)$ that checks whether
$G\models\varphi$. Like modal logic, the semantic aspects again
include nodes of $G$, but now also a counter to interpret path
quantifiers recursively.

We consider the following grammar for CTL formulas, from which the
other standard operators can be defined:
\begin{align*}
  \varphi\Coloneqq a\in\Sigma \grammarsep \varphi\vee\psi\grammarsep
  \neg\varphi\grammarsep \EG\varphi\grammarsep \Ex(\varphi\Until\psi)\grammarsep \EX\varphi
\end{align*}
The semantics for path quantifiers is given recursively based on the following:
\begin{align*}
  G,w\models \EX\varphi \,\,\Leftrightarrow\,\, \exists w'.\, E(w,w')\text{ and }
  G,w'\models\varphi
\end{align*}
with the other two path quantifiers interpreted according to the
equivalences
\begin{align*}
 (1)\quad \EG\varphi \equiv \varphi \wedge \EX(\EG\varphi)
  \quad\text{and}\quad (2)\quad
 \Ex(\varphi\Until\psi) \equiv \psi \vee \varphi \wedge \EX (\Ex(\varphi\Until\psi)),
\end{align*}
with $(1)$ understood as a greatest fixpoint and $(2)$ as a least
fixpoint, as we discuss shortly. The $\ctl$ program is given in
\Cref{fig:ctl}. Below, and in the program, we write \dblqt{$Ew$} as a
shorthand for $\{w'\in W\,:\, E(w,w')\}$. Fix a Kripke structure
$G=(W,s,E,P)$.

The alphabet ADT is trivial. The state ADT consists of two parts:
\begin{align*}
  \Asp(G) &\coloneqq \{w, \no(w) \,:\, w\in W\} \sqcup \Cnt(G) \\
  \Cnt(G) &\coloneqq \{(w, i),\, (\no(w), i) \,:\, w\in W,\, 0\le i\le|W|\},
\end{align*}
the first being states of the form $w$ or $\no(w)$, which do not
involve a counter, and the second being states of the form $(w,i)$ or
$(\no(w),i)$, which use a counter to verify path quantifiers.

The counter value $i$ tracks the stages of a least or greatest
fixpoint computation. If the counter is being used to verify a formula
$\EG\varphi$ holds at some $w\in W$, then this is a greatest
fixpoint. On the other hand, if the counter is being used to verify a
formula $\Ex(\varphi\Until\psi)$ does \emph{not} hold at some
$w\in W$, then it is a least fixpoint. To understand this, let us
consider the equivalences $(1)$ and $(2)$ as monotone functions over
$\Pp(W)$. We write
$\llbracket \varphi\rrbracket_G \coloneqq \{w\in W\,:\,
G,w\models\varphi\}$ for the set of states where a given formula
holds. Now observe that $(1)$ and $(2)$ correspond to the monotone
functions

\vspace{0.2in}
\begin{minipage}[t]{\textwidth}
\centering
\begin{tabularx}{0.7\linewidth}{r c c c l}
  $\EG_\varphi(X)$ & $=$ & $\llbracket\varphi\rrbracket$ & $\cap$ & $\left\{w\in W\,:\, Ew
  \cap X \neq \emptyset\right\}$ \\
  $\EU_{\varphi,\psi}(X)$ & $=$ & $\llbracket\psi\rrbracket$ &
  $\cup$ & $\left\{w\in W\,:\, Ew
  \cap X \neq \emptyset\right\} \cap \llbracket\varphi\rrbracket$,
\end{tabularx}
\end{minipage}
\vspace{0.1in}

\noindent with $\llbracket\EG\varphi\rrbracket_G = \gfp(\EG_\varphi)$ and
$\llbracket\Ex(\varphi\Until\psi)\rrbracket_G =
\lfp(\EU_{\varphi,\psi})$. The program $\ctl$ in \Cref{fig:ctl} has
the property that for all $G=(W,s,E,P)$, $w\in W$, $0\le i\le|W|$, and
$\varphi$:

\vspace{0.1in}
\begin{minipage}[t]{\linewidth}
  \centering
  \def\arraystretch{1.4}
  \begin{tabularx}{0.8\linewidth}{l r l l}
    & $(G,\route(\EG\varphi), \clauses(\ctl), (w,i))$ $\Downarrow_p$
    &
    $\Leftrightarrow$ & $w\in \EG_\varphi^{|W|-i}(W)$ \\
    and &
    $(G,\route(\Ex(\varphi\Until\psi)), \clauses(\ctl), (\no(w), i))$
    $\Downarrow_p$ & $\Leftrightarrow$ &
    $w\notin \EU_{\varphi,\psi}^{|W|-i}(\emptyset)$.
  \end{tabularx}
\end{minipage}
\vspace{0.1in}

\noindent The task of evaluating
$w\notin \llbracket\EG\varphi\rrbracket_G$ needs no counter because
there is always a finite computation witnessing non-membership for
$\EG$-formulas. The task of evaluating whether
$w\in \llbracket\Ex(\varphi\Until\psi)\rrbracket_G$ needs no counter
for the same reason. There are always proofs of \emph{removal} from a
greatest fixpoint and proofs of \emph{inclusion} in a least
fixpoint. Note we do not mind infinite loops in the other cases
because we interpret programs as tree automata with reachability
acceptance.

\begin{theorem}
  CTL separation for finite sets $\Pos$ and $\Neg$ of finite pointed
  Kripke structures, and grammar $\grammar$, is decidable in time
  $\Oo(2^{\poly(mn^2)}\cdot|\grammar|)$, where $m=|\Pos|+|\Neg|$ and
  $n=\max_{G\in \Pos\cup\Neg}|G|$.
\end{theorem}
\begin{proof}
  We have $|\Asp(G)| = \Oo(|W|^2)$, and the rest follows by
  \Cref{lemma:mainlemma} and \Cref{corollary:complexity-lemma}.
\end{proof}

\newpage
\begin{figure}[H]
  \centering
\begin{lstlisting}
  CTL($G$, $w$, $n$) =
    match $n.\lab$ with
     $\EG$ -> CTL($G$, $w$, $0$, $n.\stay$)
     $\EU$ -> CTL($G$, $w$, $n.\child_2$) or
           (CTL($G$, $w$, $n.\child_1$) and (any (LAM$z.$ CTL($G$, $z$, $n.\stay$)) $Ew$))
     $\EX$ -> any (LAM$z.$ CTL($G$, $z$, $n.\child_1$)) $Ew$
     $\,\,\vee$ -> CTL($G$, $w$, $n.\child_1$) or CTL($G$, $w$, $n.\child_2$)
     $\,\,\neg$ -> CTL($G$, $\no(w)$, $n.\child_1$)
     $\,\,x$ -> $x\in P(w)$

  CTL($G$, $\no(w)$, $n$) =
    match $n.\lab$ with
     $\EU$ -> CTL($G$, $\no(w)$, $0$, $n.\stay$)
     $\EG$ -> CTL($G$, $\no(w)$, $n.\child_1$) or (all (LAM$z.$ CTL($G$, $\no(z)$, $n.\stay$)) $Ew$)
     $\EX$ -> all (LAM$z.$ CTL($G$, $\no(z)$, $n.\child_1$)) $Ew$
     $\,\,\vee$ -> CTL($G$, $\no(w)$, $n.\child_1$) and CTL($G$, $\no(w)$, $n.\child_2$)
     $\,\,\neg$ -> CTL($G$, $w$, $n.\child_1$)
     $\,\,x$ -> $x\notin P(w)$

  CTL($G$, $w$, $i$, $n$) =
    match $n.\lab$ with
     $\EG$ -> if $i=|W|$ then True
            else CTL($G$, $w$, $n.\child_1$) and (any (LAM$z.$ CTL($G$, $z$, $i+1$, $n.\stay$)) $Ew$)

  CTL($G$, $\no(w)$, $i$, $n$) =
    match $n.\lab$ with
     $\EU$ -> if $i=|W|$ then true
            else CTL($G$, $\no(w)$, $n.\child_2$) and
                  (CTL($G$, $\no(w)$, $n.\child_1$) or
                   (all (LAM$z.$ CTL($G$, $\no(z)$, $i+1$, $n.\stay$)) $Ew$))
\end{lstlisting}
\caption{$\ctl$ evaluates CTL formulas $\varphi$ against an input
  pointed Kripke structure $G$ and checks $G\models\varphi$. }
  \label{fig:ctl}
\end{figure}

\newpage
\section{Finite-Variable First-Order Logic}
\label{sec:fo}
Fix a finite relational signature with relation symbols $R_i$ and a
set of variables $V=\{x_1,\ldots,x_k\}$. We write a program $\fo$ that
evaluates an $\FOk$ syntax tree $\varphi$ against a relational
structure $M$ and checks $M\models\varphi$. The aspects are the
(partial) assignments to variables $V$:
\begin{align*}
  \Asp(M)\coloneqq \{\gamma, \no(\gamma) \,:\, \gamma \in [V\rightharpoonup M]\},
\end{align*}
and $|\Asp(M)|=\Oo(|M|^k)$. The alphabet ADT consists of unary
constructors for $\forall$ and $\exists$, as well as $l$-ary
constructors for each $l$-ary relation symbol $R$. The program $\fo$
together with its (omitted) dual allows us to derive the main result
of~\cite{popl22}. The other results can also be derived, e.g., $\FOk$
with recursive definitions can be interpreted using a combination of
two-way navigation as in \Cref{sec:cfg} and counters as in
\Cref{sec:ctl}.

\begin{figure}[H]
  \centering
\begin{lstlisting}
  FO($M$, $\gamma$, $n$) =
    match $n.\lab$ with
      $\wedge$   -> FO($G$, $\gamma$, $n.\child_1$) and FO($G$, $\gamma$, $n.\child_2$)
      $\vee$   -> FO($G$, $\gamma$, $n.\child_1$) or FO($G$, $\gamma$, $n.\child_2$)
      $\neg$   -> FO($G$, $\no(\gamma)$, $n.\child_1$)
      $\forall x$  -> all (LAM$z$. FO($G$, $z$, $n.\child_1$)) $\{\update{\gamma}{x}{a} \,:\, a\in M\}$
      $\exists x$  -> any (LAM$z$. FO($G$, $z$, $n.\child_1$)) $\{\update{\gamma}{x}{a} \,:\, a\in M\}$
      $R(\overline{x})$ -> $\gamma(\overline{x})\in R^M$
\end{lstlisting}
\caption{$\fo$ evaluates first-order logic formulas $\varphi$ against
  an input relational structure $M$ and checks $M\models\varphi$. }
  \label{fig:fo-clause1}
\end{figure}
}{}

\end{document}